\newcommand{\abc}[3]{$\mbox{$\textup{#1}|#2|#3$}$}
\newcommand{\poa}{\textup{PoA}}
\newcommand{\thickhline}{\noalign{\hrule height 0.8pt}}
\newcommand{\eoe}{\hfill$\triangleleft$}
\begin{document}
\title{The Quality of Equilibria for Set Packing Games}
\author{Jasper de Jong, Marc Uetz}
\institute{University of Twente, Enschede, The Netherlands\\
{\tt \{j.dejong-3,m.uetz\}@utwente.nl}}
\maketitle

\begin{abstract}
We introduce set packing games as an abstraction of situations in which $n$ selfish players select subsets of a finite set of indivisible items, and analyze the quality of several equilibria for this class of games. Assuming that players are able to approximately play equilibrium strategies, we show that the total quality of the resulting equilibrium solutions is only moderately suboptimal. 
Our results are tight bounds on the price of anarchy for three equilibrium concepts, namely Nash equilibria, subgame perfect equilibria, and an equilibrium concept that we refer to as $k$-collusion Nash equilibrium.  
\end{abstract}

\section{Introduction}\label{sec:intro}
The set packing problem is one of Karp's 21 $\mathsf{NP}$-complete problems~\cite{Karp}; it is problem [SP3] in \cite{GJ1979}. In set packing, the task is to select from a given collection $\mathcal{S}$ of subsets of some finite universe of items $J$, a collection of $k$ disjoint subsets, for a given number $k$. In the weighted optimization version of the problem, each subset $S\in\mathcal{S}$ has a weight $w(S)$, and the goal is to find disjoint subsets of maximum total weight. Set packing is NP-complete~\cite{GJ1979}, and with respect to the approximability of the weighted optimization version, see, e.g.~\cite{Chandra}.

We here propose and analyze a game theoretic variation of the maximum weight set packing problem, which is defined as follows. We have set of $n$ players, indexed~$i\in\{1,\dots,n\}$. Each player is equipped with a downward closed collection $\mathcal{S}_i$ of subsets of a finite ground set $J$. $\mathcal{S}_i$ are the subsets that are feasible for player $i$, and then $\mathcal{S}=\cup_{i=1}^n \mathcal{S}_i$ is the collection of feasible subsets of all players. 
Each item $j\in J$ has a weight $w_j$, and the objective of any player~$i$ is to select a subset $S\in \mathcal{S}_i$ maximizing $w(S)=\sum_{j\in S}w_j$. Any of the items $j\in J$, however, can only be selected by at most one of the players. In that situation, we define a (pure strategy) Nash equilibrium as a selection of subsets $S_i$, one for every player $i=1,\dots,n$, so that $S_i\cap S_k =\emptyset$ for any two players $i\neq k$, and for each player $i$, none of the subsets $T_i\in \mathcal{S}_i$ with $T_i\subseteq J\setminus (\cup_{k\neq i}) S_k$ has a value $w(T_i)$ larger than $w(S_i)$. 
In words, given the items selected by other players, among the feasible subsets still available to player $i$, $S_i$ is the one that maximizes total value. Note the following peculiarity: When considering set packing games as $n$-player strategic form games with strategy spaces $\mathcal{S}_i$ per player, and $(S_1,\dots,S_n)$ is a strategy profile, we have to declare the payoff equal to $-\infty$ for all players $i$ and $k$ with $S_i\cap S_k\neq\emptyset$, in order to always guarantee that an equilibrium outcome corresponds to a set packing. As a consequence of this definition, once the strategies of all players except $i$ are fixed at some strategy vector $S_{-i}=(S_1,\dots,S_{i-1},S_{i+1},\dots,S_n)$, any rational choice by player $i$, i.e., with payoff $>-\infty$, does not affect the payoff of the other players. 

Our interest goes into the quality of several types of equilibria for this class of games, which to the best of our knowledge has not been addressed so far. For the entire paper, we measure the quality of an (equilibrium) solution by the total value of all selected items, or equivalently, the sum of all players' selected values $\sum_{i=1}^n w(S_i)$. The question is by how much an equilibrium solution falls behind an optimal solution that could be computed by some central authority.
 For a maximization problem as the one considered here, recall that the \emph{price of anarchy} \cite{Papadimitriou2001,Koutsoupias} denotes the ratio of the value of an optimal solution over the value of an equilibrium solution.  We analyze the {price of anarchy}  for three different equilibrium concepts, namely Nash equilibria, subgame perfect equilibria (defined by Selten~\cite{Selten65}) of a sequential version of the game, and a third equilibrium concept that we refer to as $k$-collusion Nash equilibria, as also 
 defined by Hayrapetyan et al.~\cite{Hayrapetyan}.
Because the combinatorial problems of an equilibrium play of any player could be NP-hard in general, we consider $\alpha$-approximate versions for all three equilibrium concepts, for any $\alpha\ge 1$. The idea of approximate equilibria is by now a widely accepted concept with different variations. Already Roughgarden and Tardos \cite{Roughgarden} consider it for network routing games; see also \cite{SkopalikVoecking2008} for hardness results in the context of congestion games.
Our price of anarchy bounds are tight for all $\alpha$-approximate versions.

Our original motivation to look into this class of games is a subclass of set packing games, namely throughput scheduling games. It is precisely this subclass of set packing games that we have studied in an extended abstract underlying this paper \cite{deJong2013}. In throughput scheduling, studied e.g.\  in \cite{BarNoy2001,BermanDasgupta2000} from the algorithmic perspective, the set $J$ corresponds to a set of non-preemptive jobs, each with a release time $r_j$, due date $d_j$, and a weight $w_j$. Each player has one or several machines in order to process jobs. In the most general setting, the machines can be unrelated, meaning that the processing time of any job  may depend on the machine $\ell$ it is processed on, and the $\ell\times m$ matrix $(p_{\ell j})$ of processing times on machines can have rank $>1$. A subset $S_i$ of jobs is then feasible for player $i$ if there exists a schedule of the jobs in $S_i$ on the set of machines of player $i$, so that each job can be processed in the time window $[r_j,d_j]$. Obviously, the set of jobs feasible for player $i$ is then  downward closed.

Our contribution is summarized as follows. If all players are able to play $\alpha$-approximate Nash equilibria, the price of anarchy for set packing games equals $\alpha+1$.  We also show that $\alpha$-approximate subgame perfect equilibria of a sequential version of set packing games have a price of anarchy equal to $\alpha+1$, but for the special case of symmetric set packing games (to be defined later),
subgame perfect equilibria yield an improved price of anarchy of ${\sqrt[\alpha]{e}}/({\sqrt[\alpha]{e}-1})$, which is tight, too.  Finally, we define ($\alpha$-approximate) $k$-collusion Nash equilibria. They have been defined before by Hayrapetyan et al.~\cite{Hayrapetyan} in the context of congestion games to study the price of collusion, and constitute a generalization of $k$-strong Nash equilibria \cite{Aumann1959,Andelman2009}.  The simple idea is that up to $k$ players may collude and are allowed to use any profit sharing protocol among themselves, hence can be thought of as acting like a single player. Specifically, an $n$-collusion Nash equilibrium is then just another name for an optimal solution.
For that equilibrium concept, and when players are assumed to be able to play $\alpha$-approximate $k$-collusion Nash equilibria,
 we derive a tight bound on the price of anarchy equal to $\alpha+(n-k)/(n-1)$. 

\section{Motivation \& Related Work}

Our motivation to study set packing games is to understand the performance of decentralized service systems where items are posted, e.g.\ on an internet portal, and service providers can select these jobs on a take-it-or-leave-it basis. The problem  can be seen as a stylized version of coordination problems that appear in several application domains. We give three examples.
(1) When operating microgrids for decentralized energy production, the goal is to consume locally produced energy as much as possible.  Here, the items are the operation of appliances in households (e.g.\ loading a car battery) which come with a time window and a certain monetary value. Players, on the other side, are intermediaries or local energy producers that want to maximize the total value of items than can be accepted given a profile of available energy; see, e.g.~\cite{BakkerEtAl2010,MolderinkEtAL2010} for more context.
 (2) In cloud computing, service providers such as Google provide an infrastructure service. Here, the items are computational tasks to be distributed over data- and computing centers. The aim of a federated cloud computing environment, e.g.~\cite{Buyya}, is to ``coordinate load distribution among different cloud-based data centers in order to determine optimal location for hosting application services''.  (3) In private car sharing portals like e.g.\ Tamyca~\cite{tamyca}, items are car rental requests for a certain time period, and such a request comes with a given price. Car owners in the vicinity can select such requests from the portal and rent out their car(s).
Stripping off some of the potentially complicating practical features from these applications, exactly yields the type of set packing problems that we we address here.

The overall conclusion of the analysis of equilibria that we provide here, in the light of these applications, is that the loss of efficiency caused by the lack of centralized distribution of items is only very moderate.

As to related work on set packing games, we are not aware of publications that have addressed this specific problem before. Much of the work in algorithmic game theory addresses auctions, congestion games or other types of scheduling and load balancing games. 
One distinguishing feature of set packing games is that players, e.g.\ machines in throughput scheduling games, select items and not vice versa. Of course, other models also exist where e.g.\ machines are the set of players, most prominently the task scheduling problem of the seminal publication on algorithmic mechanism design by Nisan and Ronen \cite{Nisan}. There, however, the strategy spaces of the players are the times required to perform all tasks, not the selection of tasks.

Moreover, as discussed already above, a distinguishing feature of set packing games, when compared e.g.\ with congestion games or many other machine scheduling games, is the fact that by the specific payoff structure that we impose, players other than player~$i$ influence the \emph{availability} of strategies from the strategy set $\mathcal{S}_i$ for player $i$, and the strategy then chosen by player $i$, when rational, does not affect other players anymore. It is this feature that admittedly appears somewhat special, yet in a sequential version of the game where players select items one after the other (in any given order), this is very natural. As it will turn out, there are Nash equilibria that are not realizable as subgame perfect equilibria of such a sequential game.

As matter of fact, the analysis of subgame perfect equilibria as opposed to Nash equilibria is one of the major technical contributions of this paper. At the time of writing the conference publication~\cite{deJong2013} underlying this full-length paper, the idea of considering sequential versions of games, and Selten's subgame perfect equilibria \cite{Selten65} as an alternative to avoid the ``curse of simultaneity'' of Nash equilibria had just been brought up by Paes Leme et al.~\cite{PaesLeme}. In contrast to the price of anarchy which relates the outcome of the worst possible Nash equilibrium to that of an optimal solution \cite{Papadimitriou2001,Koutsoupias}, the \emph{sequential price of anarchy}~\cite{PaesLeme} relates the outcome of the worst possible subgame perfect equilibrium of all sequential versions of the game where players act subsequently (and farsighted), to the outcome of an optimal solution.
For set packing games, it is not hard to see (see Theorem~\ref{thm:SPE=NE} below) that any outcome of a subgame perfect equilibrium of a sequential version of the game is also a Nash equilibrium in the single-shot, strategic form of the game, but not vice versa. But this is not true in general. See, e.g.\ \cite{deJong2015} for a network routing counterexample where the sequential price of anarchy is unbounded, while the price of anarchy is known to be 5/2 \cite{AwerbuchSTOC2005,Christodoulou2005}. Indeed, subsequent to \cite{PaesLeme}, for a handful of problems it was shown that the sequential price of anarchy is lower than the price of anarchy \cite{Hassin,deJong2013,deJong2014,PaesLeme}, while for some others this is exactly opposite \cite{Angelucci,Bilo,deJong2015}. 

As mentioned earlier, our results are for $\alpha$-approximate solutions for all equilibrium concepts that we address, and any $\alpha\ge 1$. The idea to consider such relaxed notions of equilibrium also appears in early publications on the price of anarchy, such as \cite{Roughgarden}. The motivation is two-fold. First, one may argue that it is not realistic that a player $i$ willing to switch strategies for small deviations. That said, a player $i$ may be content already when 
$S_i$ is an $\alpha$-approximate best response to $S_{-i}$; see~\eqref{eq:a-eq} below.
Moreover, it is conceivable that players are bound by their computational resources, and because of that are not able to play optimally. To give a concrete example, consider the throughput scheduling example where each player $i$ owns a single machine, and the feasibility system is all sets of job $j\in J$ that can be feasibly scheduled on that machine. In the 3-field notation of \cite{graham1979optimization}, this problem reads \abc{1}{r_j}{\sum w_jU_j}, where ``1'' stands for one single machine, $r_j$ specifies that there are release dates, and the objective $\sum w_jU_j$ is to minimize the total weight jobs that finish after their duedate $d_j$ (equivalently, maximize the number of jobs scheduled before their duedate $d_j$). 
In that case, the input of the problem would realistically not be a list of all feasible sets $\mathcal{S}_i$, but the input would be the set of jobs $j\in J$ with their time windows $[r_j,d_j]$, processing times $p_j$ and values $w_j$.
It follows from Lenstra et al.~\cite{LRKB1977} that the problem to compute a best response describes an $\mathsf{NP}$-hard optimization problem. 
More generally, if players control a set of several (unrelated) machines each, the problem to compute a best response
reads \abc{R}{r_j}{\sum w_jU_j} (``R'' for unrelated machines), which is equivalent to the throughput scheduling problem as it has been addressed by Bar Noy et al. \cite{BarNoy2001}, and subsequently in \cite{BermanDasgupta2000}. For this problem, and when computation of players is bound to be polynomial time, only constant factor approximation algorithms are available. 

Two interesting special cases of throughput scheduling exist where players are able to compute an optimal play. One is when feasibility 
sets $\mathcal{S}_i$ are the sets of jobs that cane be feasibly scheduled on a single machine, and jobs have unit weights and zero release dates. This problem is solved in polynomial time by the Moore-Hodgson algorithm \cite{Moore1968}. Another is when the feasibility system $\mathcal{S}_i$ is the set of jobs with unit processing times that can be scheduled on a set of identical, parallel machines. This problem can be solved as an assignment problem~\cite{Brucker2004}.

\section{Preliminaries}\label{sec:prelimSetPacking}

We fix some notation and the basic definitions. There are $n$ players, and a finite ground set $J$ of items. Each item $j\in J$ has a value $w_j$. For $S\subseteq J$, we let $w(S):=\sum_{j\in S}w_j$.
Each player $i$ has a strategy set $\mathcal{S}_i\subseteq 2^J$ which is downward closed, i.e., if $S_i\in \mathcal{S}_i$, then $T_i\in \mathcal{S}_i$ for all $T_i\subseteq S_i$. Given a strategy profile $(S_1,\dots,S_n)$, as usual define $S_{-i}:=(S_1,\dots,S_{i-1},S_{i+1},\dots,S_n)$ as the strategies of all players except $i$, and for any set of players $K\subseteq \{1,\dots,n\}$, define $S_{-K}$ accordingly.

When $(S_1,\dots,S_n)$ is a strategy profile with $S_i\in\mathcal{S}_i$ for all $i=1,\dots,n$, the payoffs for player $i$ are defined as
\[
w(S_i,S_{-i})=
\begin{cases}
w(S_i) & \text{if } S_i\cap S_k=\emptyset \text{ for all } k\neq i\,,\\
-\infty & \text{otherwise\,.}
\end{cases}
\]
A strategy profile $(S_1,\dots,S_n)$ is an $\alpha$-approximate \emph{Nash equilibrium} (for $\alpha\ge 1$) if it is true that for all players $i=1,\dots,n$
\begin{equation}\label{eq:a-eq}
w(S_i,S_{-i})\ge \frac1\alpha w(T_i,S_{-i})\quad\text{for all}\quad T_i\in \mathcal{S}_i\,. 
\end{equation}
Note that the existence of Nash equilibria with $w(S_i,S_{-i})\ge 0$ for all players $i$ is guaranteed by the fact that the feasibility systems $\mathcal{S}_i$ are downward closed. 

\newcommand{\na}{\text{$N\!E$}}\newcommand{\opt}{\text{$O\!PT$}}
For a solution $S=(S_1,\dots,S_n)$, in a slight but convenient abuse of notation\footnote{We use $S$ to denote both, a strategy vector $S=(S_1,\dots,S_n)$ as well as the total set of items that it induces, i.e., $S=\cup_{i=1}^nS_i$. That will not yield any confusion, however.} let us write $w(S):=\sum_{i=1}^nw(S_i)$ for the total value that it achieves.
The \emph{price of anarchy} (PoA) \cite{Papadimitriou2001,Koutsoupias} for a class of games $\mathcal{I}$ is then the ratio 
\begin{equation}\label{def:poa}
\text{PoA}=\sup_{I\in \mathcal{I}}\sup_{S\in \na(I)}{\frac{w(\opt(I))}{w(S)}},
\end{equation}
where $\na(I)$ denotes the set of all $\alpha$-approximate Nash equilibria of instance $I\in\mathcal{I}$. Note that for set packing games, $\opt(I)$ is a Nash equilibrium too, hence the price of stability as proposed in \cite{Anshelevich} equals~1.

Next, consider the extensive form game that is obtained when imposing some order, say $1,\dots, n$ on the players. A strategy for player $i$ is then more complex, as it must specify one action $S_i$ for all possible combinations of actions of preceding players $1,\dots,i-1$, that is, one action $S_i$ for each node of the game tree on level~$i$.  
An $\alpha$-approximate subgame perfect equilibrium is then a strategy that guarantees at least a $1/\alpha$-fraction of the optimal action for each of the nodes of the game tree on level $i$. As we deal with a full information game, ($\alpha$-approximate) subgame perfect equilibria can be computed via backward induction\footnote{E.g., see \cite{Peters}. That is conceptually simple but generally not polynomial time.}. A nice feature of set packing games is that the computation of ($\alpha$-approximate) subgame perfect equilibria is not suffering from the typical hardness results for sequential games that is due to farsighted behaviour of players: Indeed, computing outcomes of subgame perfect equilibria may be PSPACE-hard with $n$ players \cite{PaesLeme}, and NP-hard even with two players only  \cite{deJong2015}. For set packing games, an optimal action for the $i$-th player, upon observing the actions $S_1$, \dots, $S_{i-1}$ of the preceding players, is computed by solving the optimization problem
\[
\max_{T\subseteq J} w(T)\text{ s.t.\ } T\subseteq J\setminus \cap_{k=1}^{i-1}S_k\text{ and } T\in \mathcal{S}_i\,,
\]
This suffices, as by the specific payoff structure of set packing games, the value attained by player $i$ is no longer affected by payers $i+1$,\dots, $n$ (as long as they are rational).  This problem is computationally hard only if the combinatorial structure encoded by $\mathcal{S}_i$ is hard.

The price of anarchy for $\alpha$-approximate subgame perfect equilibria, also called \emph{sequential \textup{PoA}} \cite{PaesLeme}, is then defined analogously to the price of anarchy in~\eqref{def:poa},
\newcommand{\se}{\text{$S\!P\!E$}}
\begin{equation}\label{def:spoa}
\text{sequential PoA}=\sup_{I\in \mathcal{I}}\sup_{S\in \se(I)}{\frac{w(\opt(I))}{w(S)}},
\end{equation}
where the first supremum $\sup_{I\in\mathcal{I}}$ is also taken over all possible orders of players, and $\se(I)$ denotes  
all outcomes that can be obtained as $\alpha$-approximate subgame perfect equilibria of instance~$I$.

Finally, assume that up to $k$ of the given $n$ players may collude, and are allowed to use any profit-sharing rule among them. In other words, we can think of a group $K$ of up to $k$ players as maximizing their joint value $w(S_K):=\sum_{i\in K}w(S_i)$. Then an $\alpha$-approximate $k$-collusion Nash equilibrium is a strategy profile $(S_1,\dots,S_n)$ such that the following is true for all 
sets $K$ of at most $k$ players, 
\begin{equation}\label{eq:k-eq}
w(S_K,S_{-K})\ge \frac1\alpha w(T,S_{-K})\quad\text{for all}\quad T=\cup_{i\in K}T_i\text{ and } T_i\in \mathcal{S}_i\,. 
\end{equation}
Obviously, the price of anarchy for $\alpha$-approximate $k$-collusion Nash equilibria is then again defined analogously to the price of anarchy in~\eqref{def:poa} by
\newcommand{\ce}{\text{$C\!E$}}
\begin{equation}\label{def:spoa}
\text{$k$-collusion PoA}=\sup_{I\in \mathcal{I}}\sup_{S\in \ce_k(I)}{\frac{w(\opt(I))}{w(S)}}\,,
\end{equation}
where $\ce_k(I)$ denotes the set of
$\alpha$-approximate $k$-collusion Nash equilibria of instance~$I$.

\section{An Illustrating Example}
To illustrate our definitions, consider the following, simple example.
\begin{example}\label{ex:trivial}
Assume that we have $n=2$ players and two items $J=\{1,2\}$, with weights $w_1=w_2=1$, and the feasible subsets are $\mathcal{S}_1=\{\emptyset,\{1\},\{2\}\}$ and  $\mathcal{S}_2=\{\emptyset,\{2\}\}$.
\eoe
\end{example}

Then we obviously have that $\opt=(\opt_1,\opt_2)=(\{1\},\{2\})$ is an optimal solution with $w(\opt)=2$. Next to \opt, the solution $S=(S_1,S_2)=(\{2\},\emptyset)$ is a Nash equilibrium, too, because $\{1\}\not\in\mathcal{S}_2$. That yields that this instance has \poa=2.
The strategic form of this game is depicted in Figure~\ref{fig:strat_game}.
\begin{figure}
\centering
\begin{tabular}{cc!{\vrule width 1pt}c|c|}
    \cline{3-4}
  & & \multicolumn{2}{|c|}{player 2}\\
    \cline{3-4}
   &  & $\emptyset$ & \{2\} \\
       \thickhline
       \multicolumn{1}{|c}{\multirow{3}{*}{player 1}} &
        \multicolumn{1}{|c!{\vrule width 1pt}}{$\emptyset$} & \hspace*{2.5ex}0,0\hspace*{2.5ex}& $0,1$  \\ 
        \cline{2-4}
        \multicolumn{1}{|c}{} &
        \multicolumn{1}{|c!{\vrule width 1pt}}{\{1\}} & {1,0} & $\mathbf{1,1}$ \\ 
        \cline{2-4}
        \multicolumn{1}{|c}{} &
        \multicolumn{1}{|c!{\vrule width 1pt}}{\{2\}} & $\mathbf{1,0}$ & $-\infty,-\infty$  \\ 		        
        \cline{2-3}
\hline
\end{tabular}
\caption{Strategic form for Example~\ref{ex:trivial} with Nash equilibria in bold.}\label{fig:strat_game}
\end{figure}

When considering the sequential game where player 1 precedes player 2, this yields a game tree that is depicted in Figure~\ref{fig:ext_game}. Here, all Nash equilibria are also obtained as subgame perfect equilibria, namely $(\{1\},(\{2\},\emptyset,\{2\}))$ and $(\{2\},(\{2\},\emptyset,\{2\}))$, with outcomes $(S_1,S_2)=(\{1\},\{2\})$ and $(\{2\},\emptyset)$ and 
corresponding payoffs $(1,1)$ and $(1,0)$, respectively. The worst case subgame perfect equilibrium is indicated in bold in Figure~\ref{fig:ext_game}. For the reverse order of the sequential game (player $2\to$ player 1), the only subgame perfect equilibria
are $(\{2\},(\{1\},\{1\}))$ and $(\{2\},(\{2\},\{1\}))$, with as unique outcome $(S_1,S_2)=(\{1\},\{2\})$ and corresponding payoff $(1,1)$.  As the sequential \poa\ takes the worst case over all possible sequential games, Example~\ref{ex:trivial} has sequential \poa=2. In general, Nash equilibria exist which are not sequentially realizable at all (cf.\ also \cite{Milchtaich1998}); see also Example~\ref{ex:PoAIdenticalSetPacking} below.
\begin{figure}
\centering
\includegraphics[width=0.5\textwidth]{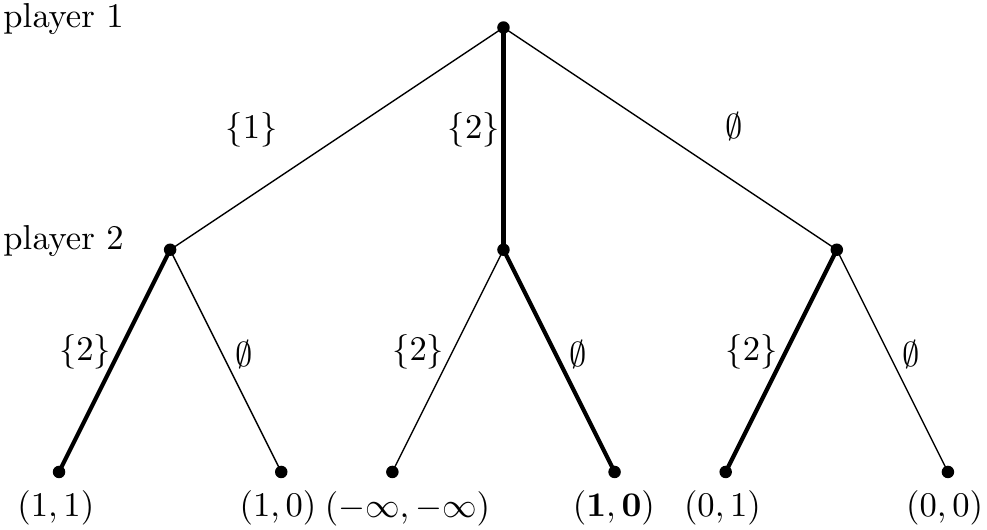}
\caption{Game tree for sequential version (player $1\to$ player 2) of Example~\ref{ex:trivial}.}\label{fig:ext_game}
\end{figure}

Finally, assume that both players collude, then obviously, the only allocation that maximizes their joint payoff is 
$(S_1,S_2)=(\{1\},\{2\})$ and corresponding payoff $(1,1)$. Therefore, the $2$-collusion $\poa=1$.

\section{Warmup: The Price of Anarchy}\label{sec:AANE}

We begin by giving the simple proof for the upper bound on the price of anarchy for arbitrary set packing games.
\begin{theorem}\label{thm:UB2}
$\poa\le \alpha+1$ for set packing games, assuming that all players play $\alpha$-approximate Nash equilibria.
\end{theorem}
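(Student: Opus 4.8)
The plan is to fix an arbitrary $\alpha$-approximate Nash equilibrium $S=(S_1,\dots,S_n)$ and an optimal solution $O=(O_1,\dots,O_n)$, and to bound $w(O)$ from above in terms of $w(S)$ by charging, for each player $i$, the value $w(O_i)$ against $w(S_i)$ plus an ``overlap'' contribution. The only deviation I need to consider for player $i$ is the restriction of its optimal set to the items still available, namely $T_i:=O_i\setminus\bigcup_{k\neq i}S_k$.

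First I would observe that $T_i\in\mathcal{S}_i$: indeed $O_i\in\mathcal{S}_i$ and $T_i\subseteq O_i$, so downward closedness of $\mathcal{S}_i$ applies. By construction $T_i\cap S_k=\emptyset$ for all $k\neq i$, hence $w(T_i,S_{-i})=w(T_i)$, and also $w(S_i,S_{-i})=w(S_i)$ since $S$ is a packing. Plugging $T_i$ into the $\alpha$-approximate Nash condition~\eqref{eq:a-eq} gives $w(S_i)\ge\frac1\alpha\,w\bigl(O_i\setminus\bigcup_{k\neq i}S_k\bigr)$, which rearranges to
\[
w(O_i)\;\le\;\alpha\,w(S_i)\;+\;w\Bigl(O_i\cap\bigcup_{k\neq i}S_k\Bigr)\,.
\]

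Next I would sum this over all players $i=1,\dots,n$. The first term on the right contributes $\alpha\,w(S)$. For the second term I would use a double-counting argument: every item $j$ contributing to $\sum_i w\bigl(O_i\cap\bigcup_{k\neq i}S_k\bigr)$ lies in $O_i$ for at most one index $i$ (the sets $O_i$ are pairwise disjoint, being a packing) and at the same time lies in $\bigcup_k S_k=S$; therefore the total weight collected in this sum is at most $w(S)$. Combining the two estimates yields $w(O)=\sum_i w(O_i)\le(\alpha+1)\,w(S)$, and taking the supremum over equilibria and instances gives $\poa\le\alpha+1$.

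I do not expect a genuine obstacle here; the argument uses nothing about the structure of the $\mathcal{S}_i$ beyond downward closedness, and it is independent of the choice of optimal solution. The only step needing a little care is the overlap bound, where one must invoke both that the $O_i$ partition the items they cover and that the $S_k$ do as well, so that each overlap item is counted for at most one player and is genuinely an item of $S$.
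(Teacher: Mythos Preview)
Your proposal is correct and is essentially the same argument as the paper's: both exploit downward closedness to let player $i$ deviate to the available part of $O_i$, apply the $\alpha$-approximate Nash condition, and then bound the remaining overlap with $S$ by $w(S)$. The only cosmetic difference is that the paper uses the slightly smaller deviation $O_i\cap\overline{S}$ and writes the chain starting from $(\alpha+1)w(S)$, whereas you use $O_i\setminus\bigcup_{k\neq i}S_k$ and sum $w(O_i)\le\alpha\,w(S_i)+w(O_i\cap\bigcup_{k\neq i}S_k)$; the two are equivalent.
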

\begin{proof} 
Take any instance with optimal solution \opt\ and Nash equilibrium $S$, and let $S_i$ and ${\opt}_i$, $i=1,\dots,n$, be the items selected by  player $i$ in $S$ and \opt, respectively.
For $W\subseteq J$, let $\overline{W}=J\setminus W$ be the complement of $W$ in~$J$.

Since all items in $\overline{S}$ are available, and all items in ${\opt}_i$ are feasible for player $i$, and all $\mathcal{S}_i$ are downward closed, by the definition of $\alpha$-approximate Nash equilibrium we have for all players~$i$ that
$\alpha w(S_i)\geq w(\opt_i\cap\overline{S})$.
Now we get, by using linearity of the objective function across players,
\begin{eqnarray*}
(\alpha+1) w(S) &\geq& \alpha w(S)+w(\opt\cap S) \\
&=& \sum\nolimits_{i=1}^n \alpha w(S_i)+ w(\opt\cap S)\\
&\geq& \sum\nolimits_{i=1}^n w(\opt_i \cap \overline{S})+ w(\opt\cap S)\\
&=& w(\opt)\,.
\end{eqnarray*}\qed
\end{proof}

Next we give a matching lower bound example, which is in fact a simple instance for throughput scheduling (yet an asymmetric set packing game, see Section~\ref{sec:seq} below).
\begin{example}\label{ex:related2} Assume without loss of generality that $\alpha={p}/{q}$, where $p\geq q$. Consider a game with $q+1$ players. For each player $i$, there is one machine, which we also denote by $i$. The set $J$ of items are jobs that are partitioned into two sets $P$ and $Q$, with $|P|=p,|Q|=q$. Each job $j\in J$ has deadline $d_j=1$, unit weight $w_j=1$, and its processing time on machine 1 is $p_{j1}=1/p$. Moreover, jobs $j\in Q$ have processing time $p_{ji}=1$ on any other machine $i\neq 1$, while jobs $j\in P$ have processing time $p_{ji}=2$ on any other machine $i\neq 1$. Note that any subset of jobs of size $p$ can be feasibly allocated to player 1. Players $2\dots n$ can be allocated only one job each, and only jobs from $Q$. See Figure~\ref{vb2} for an illustration in the case where $\alpha=3/2$. \eoe
\end{example}
\begin{figure}[tbhp]
   \centering
   \includegraphics[width=0.6\textwidth]{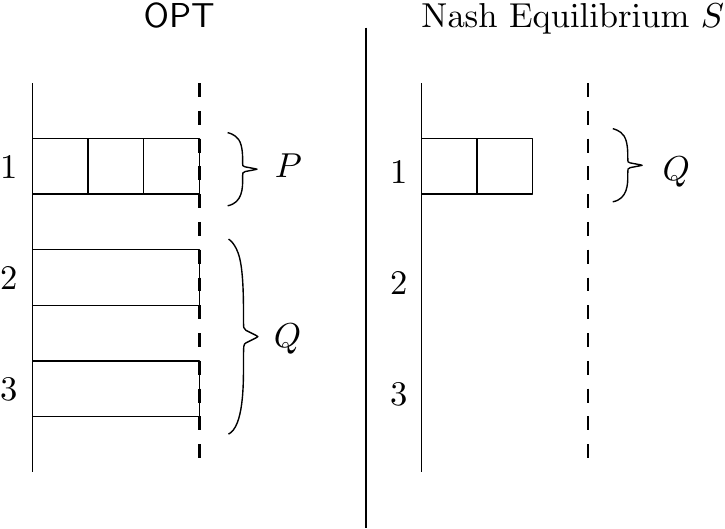}\caption{Example~\ref{ex:related2} for $p=3$ and $q=2$. Numbers represent machines. Rectangles represent jobs. The left side of each job is its starting time, its width is its processing time on the machine on which it is allocated. The dashed line is the deadline, which is the same for all jobs in this example.}\label{vb2}
   \end{figure}

\begin{theorem}\label{lem:LBGeneralSetPacking}
$\poa\geq\alpha+1$ for throughput scheduling games (and hence also for set packing games), assuming that all players play $\alpha$-approximate Nash equilibria.
\end{theorem}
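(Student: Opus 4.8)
The plan is to verify that the instance of Example~\ref{ex:related2} admits an $\alpha$-approximate Nash equilibrium whose value is a factor $\alpha+1$ below the social optimum. First I would fix the optimal value. Since any $p$-subset of jobs is feasible for player~$1$, and a single job of $Q$ is feasible for each of the players $2,\dots,q+1$, the profile that assigns the $p$ jobs of $P$ to player~$1$ and distributes the $q$ jobs of $Q$ one per remaining player is a feasible set packing that schedules all $p+q$ jobs; as there are only $p+q$ jobs in total, $w(\opt)=p+q$ and this is optimal.

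Next I would exhibit the bad equilibrium $S=(S_1,\dots,S_{q+1})$ defined by $S_1=Q$ (feasible, since $q\le p$ and the total processing time on machine~$1$ is $q/p\le 1$) and $S_i=\emptyset$ for $i\neq 1$. The selected sets are pairwise disjoint, so $S$ is a valid set packing, with value $w(S)=q$. To see that $S$ satisfies~\eqref{eq:a-eq}, note first that for a player $i\neq 1$ every job of $Q$ is taken by player~$1$ and every job of $P$ is infeasible on machine~$i$, so $\emptyset$ is the only response with finite payoff and $0\ge\tfrac1\alpha\cdot 0$ holds trivially. For player~$1$, the other players select nothing, so all $p+q$ jobs are available; since every feasible set on machine~$1$ consists of at most $p$ unit-weight jobs (each of processing time $1/p$ and common deadline~$1$), the best response has value exactly $p$, and the equilibrium condition reads $w(S_1)=q\ge\tfrac1\alpha\,p=\tfrac qp\,p=q$, which holds with equality. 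Hence $S\in\na(I)$.

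It follows that
\[
\poa\;\ge\;\frac{w(\opt)}{w(S)}\;=\;\frac{p+q}{q}\;=\;\alpha+1 .
\]
Since the construction assumes $\alpha=p/q$ is rational, I would finish by observing that the same profile $S$ is an $\alpha'$-approximate Nash equilibrium for every $\alpha'\ge p/q$; letting rationals $p/q\uparrow\alpha$ then yields $\poa\ge p/q+1$ for each of them, hence $\poa\ge\alpha+1$ for all $\alpha\ge 1$, which together with Theorem~\ref{thm:UB2} makes the bound tight. I do not expect a genuine obstacle here; the only delicate point is the computation of player~$1$'s best response — the fact that machine~$1$ accommodates at most $p$ of the given jobs before the deadline — which is precisely what makes the equilibrium inequality for player~$1$ hold with equality rather than be violated.
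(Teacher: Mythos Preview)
Your proof is correct and follows essentially the same approach as the paper: both use Example~\ref{ex:related2}, exhibit the same optimum (all $p+q$ jobs scheduled) and the same $\alpha$-approximate equilibrium ($S_1=Q$, $S_i=\emptyset$ for $i\ge 2$), and verify the equilibrium condition for player~$1$ via the bound of at most $p$ jobs on machine~$1$. Your treatment is in fact slightly more detailed than the paper's, explicitly checking feasibility of $S_1=Q$ and spelling out the limiting argument for irrational~$\alpha$ that the paper only mentions in a remark after the proof.
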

\begin{proof}
In the  optimum solution $\opt$, all $p+q$ jobs are feasibly allocated: All jobs in $P$ are allocated to player 1, and each of the jobs in $Q$ is allocated to one of the $q$ other players $2,\dots, q+1$.
Now consider the $\alpha$-approximate Nash equilibrium $S$ where only $q$ jobs are allocated: All jobs from $Q$ are allocated to player 1, and no jobs are allocated to players $2, \dots, q+1$. This is indeed an $\alpha$-approximate Nash equilibrium, as player 1 
achieves a total value of $q$, while maximally that player can be allocated $p=\alpha q$ jobs. In other words, the $\alpha$-approximate Nash condition \eqref{eq:a-eq} holds for player 1. Moreover, given that all jobs from $Q$ are allocated to player $1$, players $2, \dots, q+1$ cannot do better than a value 0, as none of the jobs from $P$ are feasible for these players. We conclude that $\poa\geq w(\opt)/w(S)= (p+q)/q=\alpha+1$.\qed
\end{proof}

Note that when $\alpha$ is not rational, we can obtain a price of anarchy arbitrarily close to $\alpha+1$ by letting $p/q$ approach $\alpha$. Also recall that $\alpha=1$ for the special case where the players can verify whether a solution is a Nash equilibrium, which yields the following.
 \begin{corollary}
$\poa=2$ for set packing games and throughput scheduling games.
\end{corollary}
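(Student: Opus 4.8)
The plan is simply to instantiate the two preceding bounds at $\alpha=1$, which by the remark just above the corollary is exactly the regime where players are able to certify that a profile is a Nash equilibrium. Theorem~\ref{thm:UB2} gives $\poa\le\alpha+1=2$ for every set packing game, hence also for every throughput scheduling game, since the latter form a subclass. For the matching lower bound I would invoke Theorem~\ref{lem:LBGeneralSetPacking} with $\alpha=1$: writing $\alpha=p/q$ with $p=q=1$ yields a two-player throughput scheduling instance with one job in $P$ and one job in $Q$, in which $\opt$ allocates both jobs (value $2$) while the Nash equilibrium allocates only the single $Q$-job to player~$1$ (value $1$), so $\poa\ge 2$. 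One checks that the argument in the proof of Theorem~\ref{lem:LBGeneralSetPacking} goes through verbatim at $\alpha=1$, as player~$1$ attains value $q$ out of a maximum of $p=q$, so the exact Nash condition holds, and players $2,\dots,q+1$ cannot beat value $0$. Alternatively, Example~\ref{ex:trivial} already exhibits an exact Nash equilibrium of a set packing game with $\poa=2$.

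Combining $\poa\le 2$ and $\poa\ge 2$ gives $\poa=2$ for both classes. There is essentially no obstacle here: the statement is a direct corollary of Theorems~\ref{thm:UB2} and~\ref{lem:LBGeneralSetPacking}, and the only points worth a line of verification are that the lower-bound construction remains valid when $p=q$ and that $\alpha=1$ is indeed the setting contemplated by the corollary.
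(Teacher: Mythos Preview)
Your proposal is correct and matches the paper's own reasoning: the corollary is obtained precisely by setting $\alpha=1$ in Theorems~\ref{thm:UB2} and~\ref{lem:LBGeneralSetPacking}, with the remark preceding the corollary justifying that $\alpha=1$ is the relevant case. Your extra verification that the lower-bound instance with $p=q=1$ is a genuine Nash equilibrium (and the reference to Example~\ref{ex:trivial}) is fine and only makes the argument more explicit.
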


Finally note that the upper bound is universal in the sense that it is independent of how the ($\alpha$-approximate) Nash equilibrium is obtained. It is conceivable that specific algorithms can yield a better bound for the price of anarchy. However, the existence of more complicated counter-examples for specific algorithms is not unlikely either (see next Section~\ref{sec:seq} for an example).

\section{The Sequential Price of Anarchy}\label{sec:seq}
It is actually not difficult to see that the example that we have used in the preceding section as a lower bound example for Nash equilibria, also provides a lower bound for subgame perfect equilibria. Hence we get the following for free.
\begin{theorem}\label{thm:SPoAUB2}
The sequential $\poa=\alpha+1$ for set packing games and throughput scheduling games, assuming that players play $\alpha$-approximate subgame perfect equilibria.
\end{theorem}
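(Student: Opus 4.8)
The statement amounts to a matching pair of inequalities, and the plan is to obtain the upper bound as a corollary of Section~\ref{sec:AANE} and the lower bound by re-reading Example~\ref{ex:related2} as a sequential game. For the upper bound I would show that every outcome of an $\alpha$-approximate subgame perfect equilibrium of \emph{any} sequential version of a set packing game is also an $\alpha$-approximate Nash equilibrium of its strategic form, i.e.\ $\se(I)\subseteq\na(I)$ for every instance $I$; combined with Theorem~\ref{thm:UB2} this yields sequential $\poa\le\alpha+1$ at once. To prove the inclusion (the $\alpha$-approximate analogue of Theorem~\ref{thm:SPE=NE}), fix an order $1,\dots,n$, an $\alpha$-approximate subgame perfect equilibrium, and its on-path outcome $(S_1,\dots,S_n)$. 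When player $i$ is about to move, the still-available items are exactly $J\setminus(S_1\cup\dots\cup S_{i-1})$, and by the specific payoff structure of set packing games the (rational) continuation play of players $i+1,\dots,n$ cannot change player $i$'s payoff; hence any rational choice $T\in\mathcal{S}_i$ with $T\subseteq J\setminus(S_1\cup\dots\cup S_{i-1})$ yields payoff exactly $w(T)$. The $\alpha$-approximate subgame perfect condition therefore reads $\alpha\,w(S_i)\ge w(T)$ for all such $T$. Since $J\setminus\bigcup_{k\neq i}S_k\subseteq J\setminus(S_1\cup\dots\cup S_{i-1})$, the same inequality holds in particular for all $T\in\mathcal{S}_i$ with $T\subseteq J\setminus\bigcup_{k\neq i}S_k$, which is precisely the $\alpha$-approximate Nash condition~\eqref{eq:a-eq} for player $i$.

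For the lower bound I would reuse Example~\ref{ex:related2}, now viewed as a sequential game in which player $1$ moves first and players $2,\dots,q+1$ move afterwards in any fixed order. I claim the bad profile $S$ from the proof of Theorem~\ref{lem:LBGeneralSetPacking} — player $1$ takes the $q$ jobs of $Q$, players $2,\dots,q+1$ take nothing — arises as the outcome of an $\alpha$-approximate subgame perfect equilibrium of this game. I would construct the strategies by backward induction: each of players $2,\dots,q+1$, at every node, selects an arbitrary still-available job from $Q$ that is feasible for it (feasible because $Q$-jobs have unit processing time and deadline $1$ on these machines) and selects nothing if no such job remains; this is a best response at every node, since no $P$-job is ever feasible for these players. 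Given these continuation strategies, the payoff of player $1$ from any feasible choice $T$ (necessarily $|T|\le p$) equals $|T|$, so the best-response value is $p$; the choice $Q$ gives $q=p/\alpha$ and is thus an $\alpha$-approximate best response, so the constructed profile is an $\alpha$-approximate subgame perfect equilibrium with outcome $S$. Since $w(\opt)=p+q$ and $w(S)=q$, this sequential game has ratio $(p+q)/q=\alpha+1$, hence sequential $\poa\ge\alpha+1$ (and, for irrational $\alpha$, arbitrarily close to $\alpha+1$ by letting $p/q\to\alpha$, exactly as for Theorem~\ref{lem:LBGeneralSetPacking}).

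The only point requiring a moment's care — and the reason the result is essentially ``for free'' — is the inclusion in the first paragraph: one must observe that when player $i$ acts in the sequential game the set of available items \emph{contains} the set $J\setminus\bigcup_{k\neq i}S_k$ relevant for the Nash condition, so that an $\alpha$-approximate best response over the larger set is automatically one over the smaller set. Beyond this, the argument only combines Theorem~\ref{thm:UB2} with the already-established analysis of Example~\ref{ex:related2}, so I do not anticipate any real obstacle.
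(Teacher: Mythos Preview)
Your proposal is correct and follows essentially the same approach as the paper: the upper bound is obtained by showing that every $\alpha$-approximate subgame perfect outcome is an $\alpha$-approximate Nash equilibrium (the paper isolates this as Theorem~\ref{thm:SPE=NE}) and then invoking Theorem~\ref{thm:UB2}, while the lower bound reuses Example~\ref{ex:related2} with player~1 moving first and choosing $Q$. Your write-up is in fact more careful than the paper's in spelling out the off-path strategies of players $2,\dots,q+1$ and in isolating the containment $J\setminus\bigcup_{k\neq i}S_k\subseteq J\setminus(S_1\cup\dots\cup S_{i-1})$ as the crux of the inclusion argument.
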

\begin{proof}
Recall Example~\ref{ex:related2}, and assume that player 1 is the first to make a selection. Then if player 1 makes the same selection of job set $Q$ as in the proof of Theorem~\ref{lem:LBGeneralSetPacking}, the obtained solution can indeed be obtained as an $\alpha$-approximate subgame perfect equilibrium, as player 1 cannot improve by more than a factor $\alpha$ by selecting other jobs, and given that, all other players have nothing to choose. (We can specify any reasonable actions for those parts of the game tree that are not played in this equilibrium.)
By the same argument as before, the lower bound on the price of anarchy follows. \qed
\end{proof}

To finish the proof of Theorem~\ref{thm:SPoAUB2}, observe that the upper bound of Theorem~\ref{thm:UB2} also carries over, by the subsequent theorem. 
\begin{theorem}\label{thm:SPE=NE}
For set packing games, the actions played in an $\alpha$-approximate subgame perfect equilibrium of any sequential version of set packing game define an $\alpha$-approximate Nash equilibrium in the original, single-shot game.
\end{theorem}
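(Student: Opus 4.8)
The plan is to fix an arbitrary order of the players, say $1,\dots,n$, consider an $\alpha$-approximate subgame perfect equilibrium of the induced sequential game, and examine the outcome $(S_1,\dots,S_n)$ played along its equilibrium path. I want to show that this profile satisfies the single-shot condition~\eqref{eq:a-eq} for every player $i$. Throughout, write $A_i:=J\setminus\bigcup_{k<i}S_k$ for the set of items still available at the on-path node where player $i$ moves.

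First I would establish, by backward induction along the game tree, the two facts I need: (i) on the equilibrium path the sets $S_1,\dots,S_n$ are pairwise disjoint, so $(S_1,\dots,S_n)$ is a genuine set packing and $w(S_i,S_{-i})=w(S_i)$; and (ii) for every player $i$ one has $w(S_i)\ge\frac1\alpha\max\{w(T):T\in\mathcal{S}_i,\ T\subseteq A_i\}$. The engine for both is the payoff structure of set packing games recalled in Section~\ref{sec:prelimSetPacking}: once player $i$ has selected a set $T$ disjoint from the predecessors' sets, the value $w(T)$ it collects is no longer influenced by players $i+1,\dots,n$ as long as these act rationally — and in an ($\alpha$-approximate) subgame perfect equilibrium they do, since selecting a set overlapping an earlier $S_\ell$ gives payoff $-\infty$, which is never within a factor $\alpha$ of the value $0$ attainable via $\emptyset\in\mathcal{S}_k$. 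Hence, at the node reached after $S_1,\dots,S_{i-1}$, the optimal continuation value for player $i$ is exactly $\max\{w(T):T\in\mathcal{S}_i,\ T\subseteq A_i\}$, and by definition of an $\alpha$-approximate subgame perfect equilibrium the played action $S_i$ attains at least a $1/\alpha$-fraction of it; disjointness then propagates upward, starting from player $n$, by the same observation.

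Given (i) and (ii), verifying~\eqref{eq:a-eq} is immediate. Fix a player $i$ and any $T_i\in\mathcal{S}_i$. If $T_i$ intersects some $S_k$ with $k\neq i$, then $w(T_i,S_{-i})=-\infty$ and \eqref{eq:a-eq} holds trivially. Otherwise $T_i\subseteq J\setminus\bigcup_{k\neq i}S_k\subseteq J\setminus\bigcup_{k<i}S_k=A_i$, so $T_i$ is one of the sets considered in (ii), whence $w(S_i,S_{-i})=w(S_i)\ge\frac1\alpha w(T_i)=\frac1\alpha w(T_i,S_{-i})$. Conceptually: a single-shot deviation of player $i$ must avoid \emph{all} other chosen sets, a stronger requirement than merely avoiding the predecessors' sets, so it cannot beat the $\alpha$-approximate best response that $S_i$ already guarantees against the larger family $\{T\in\mathcal{S}_i:T\subseteq A_i\}$.

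I expect the only genuine subtlety to be step (ii): one must argue carefully that the "optimal action'' in the backward-induction definition of an $\alpha$-approximate subgame perfect equilibrium, evaluated at the on-path node of player $i$, equals $\max\{w(T):T\in\mathcal{S}_i,\ T\subseteq A_i\}$ and does not secretly depend on the successors' (approximate) equilibrium strategies. This is precisely where the special payoff structure is used — rational successors never touch items already taken, and hence cannot diminish player $i$'s realized value — together with the small remark that an $\alpha$-approximate best response is never $-\infty$, since $\emptyset$ is always feasible and worth $0$.
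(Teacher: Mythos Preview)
Your argument is correct and follows essentially the same route as the paper: both proofs use the special payoff structure to conclude that at player $i$'s on-path node the $\alpha$-approximate subgame perfect condition gives $\alpha\,w(S_i)\ge w(T_i)$ for every $T_i\in\mathcal{S}_i$ with $T_i\subseteq J\setminus\bigcup_{k<i}S_k$, and then observe that this a fortiori covers all $T_i\subseteq J\setminus\bigcup_{k\neq i}S_k$, which is exactly the Nash condition~\eqref{eq:a-eq}. Your write-up is in fact a bit more careful than the paper's, explicitly handling the disjointness of the on-path profile and the trivial case $w(T_i,S_{-i})=-\infty$.
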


\begin{proof}
Consider the actions $S=(S_1,\dots,S_n)$ played in any subgame perfect equilibrium $\se$ of any sequential version of the set packing game. Assume w.l.o.g.\ the order was $1, \dots, n$. Consider any player $i$ choosing $S_i$. As the choice $S_i$ is part of a subgame perfect strategy, we know
$\alpha w(S_i) \geq  w(T_i)$  for all  $T_i\in \mathcal{S}_i$  with  $T_i\subseteq J\setminus\cup_{k=1}^{i-1}S_k$, since in a subgame perfect equilibrium, $i$'s payoff is not 
affected by (rational) subsequent players $k>i$, for any such $T_i$.  This 
because, for any such $T_i$,
none of the subsequent players $k>i$ will choose to select an $S_k$ with $T_i\cap S_k\neq\emptyset$.
But this of course also implies that 
$\alpha w(S_i) \geq  w(T_i)$  for all  $T_i\in \mathcal{S}_i$  with  $T_i\subseteq J\setminus\cup_{k\neq i}S_k$,
by the same argument. This is exactly the Nash condition~\eqref{eq:a-eq}, which is true for all players $i$. Hence $S$ a Nash equilibrium in the original, single-shot game.
\qed
\end{proof}
Note that this is not true in general. See, e.g., \cite{deJong2015} for an example. Indeed, it is a result of the definition of payoffs for set packing games. Finally, for $\alpha=1$, we obtain the following.
 \begin{corollary}
The sequential $\poa=2$ for set packing games and throughput scheduling games.
\end{corollary}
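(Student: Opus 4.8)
The plan is to specialize Theorem~\ref{thm:SPoAUB2} to the case $\alpha = 1$: that theorem already establishes that the sequential price of anarchy equals $\alpha+1$ for every $\alpha \ge 1$, and $\alpha = 1$ is precisely the regime in which each player can compute an exact best response (equivalently, can verify whether a given profile is an equilibrium), so $\alpha+1 = 2$. There is essentially nothing new to do; the corollary is a direct reading of the preceding theorem, and the only work is to record the two inequalities cleanly.

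Concretely, I would spell out both directions. For the upper bound, I would invoke Theorem~\ref{thm:SPE=NE} to say that the actions played in any exact subgame perfect equilibrium of any sequential version of a set packing game form a Nash equilibrium of the single-shot game, and then apply Theorem~\ref{thm:UB2} with $\alpha = 1$, which bounds the value of any such Nash equilibrium from below by half the optimum; hence the sequential $\poa \le 2$. For the matching lower bound I would take Example~\ref{ex:related2} with $p = q$ (so that $\alpha = p/q = 1$): the optimal solution allocates all $p+q = 2q$ jobs, whereas the profile in which player~1 selects the $q$ jobs of $Q$ and players $2,\dots,q+1$ select nothing has total value $q$; letting player~1 move first in the sequential game, this outcome is realizable as a subgame perfect equilibrium because player~1 can hold at most $p = q$ unit-value jobs (and is thus already optimal), after which players $2,\dots,q+1$ have no feasible job left. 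This yields a ratio $(p+q)/q = 2$.

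The only point that deserves a brief check — and the closest thing to an obstacle, which is minor — is that the lower-bound instance degenerates gracefully at $\alpha = 1$: with $p = q$, player~1's "bad'' choice of $Q$ is in fact an \emph{exact} best response (value $q$ out of a maximum of $q$), so the profile is an exact, not merely approximate, subgame perfect equilibrium, and likewise a genuine Nash equilibrium in the single-shot game. Everything else carries over verbatim from the proofs of Theorems~\ref{thm:SPoAUB2} and~\ref{thm:SPE=NE}, so no separate argument is required.
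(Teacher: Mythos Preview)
Your proposal is correct and follows exactly the paper's approach: the corollary is obtained by setting $\alpha=1$ in Theorem~\ref{thm:SPoAUB2}, with the upper bound coming from Theorem~\ref{thm:SPE=NE} together with Theorem~\ref{thm:UB2}, and the lower bound from Example~\ref{ex:related2} (which at $p=q$ gives an exact subgame perfect equilibrium, as you correctly verify). The paper in fact offers no separate argument beyond the remark ``for $\alpha=1$, we obtain the following,'' so your spelled-out version is more detailed but identical in substance.
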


\section{Symmetric Set Packing Games}\label{sec:IM}
We call a set packing game \emph{symmetric} whenever there is only one feasibility system $\mathcal{S}$ that is the same for all players $i$, but a player $i$ can select $x_i$ feasible sets from $\mathcal{S}$, for some integer $x_i\geq 1$.
Note that when all $x_i=1$, this exactly means that the strategic form game is symmetric in the sense that all players have exactly the same strategy set. However we choose to 
allow players to select multiple feasible sets. In the throughput scheduling context, that would be a player who controls $x_i$ identical machines. We define $x:=\sum_{i=1}^nx_i$ to be the total set of feasible sets from $\mathcal{S}$ that can be chosen by all players together, and note that $x\geq n$.

In this section we show that the symmetric version of set packing games allows an improvement in the price of anarchy when considering sequential games and subgame perfect equilibria. In the light of Theorem~\ref{thm:SPE=NE}, that boils down to the statement that some of the Nash equilibria that are responsible for the price of anarchy of $\alpha+1$ (which also holds for symmetric set packing games, see Theorem~\ref{thm:PoAIdenticalSetPacking} below), are not achievable by sequential play, hence they are probably not realistic.

\subsection*{The Price of Anarchy}

\begin{theorem}\label{thm:PoAIdenticalSetPacking}
$\poa=\alpha+1$ for symmetric set packing games, assuming that all players play $\alpha$-approximate equilibria.
\end{theorem}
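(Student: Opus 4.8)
The plan is to prove the two matching bounds separately; only the lower bound requires new work.

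\textbf{Upper bound $\poa\le\alpha+1$.} A symmetric set packing game \emph{is} a set packing game in the sense of Section~\ref{sec:prelimSetPacking}: if $\mathcal{S}$ is the common feasibility system and player~$i$ may select $x_i$ sets from it, then player~$i$'s strategy set is $\mathcal{S}_i=\{T_1\cup\cdots\cup T_{x_i}: T_1,\dots,T_{x_i}\in\mathcal{S}\}$, which is downward closed because $\mathcal{S}$ is, and the $\alpha$-approximate Nash condition \eqref{eq:a-eq} is the same object viewed either way. Hence Theorem~\ref{thm:UB2} applies verbatim and nothing new is needed here.

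\textbf{Lower bound $\poa\ge\alpha+1$.} For this I would exhibit, for every rational $\alpha=p/q$ with $p\ge q$, a symmetric instance together with an $\alpha$-approximate Nash equilibrium of value a factor $\alpha+1$ below the optimum (irrational $\alpha$ is then recovered by letting $p/q\to\alpha$, and the case $\alpha=1$ yields the corollary $\poa=2$). The target is a symmetric analogue of Example~\ref{ex:related2}: a family $Q$ of ``cheap'' items whose total weight equals the equilibrium value, and a family $P$ with $w(P)=\alpha\,w(Q)$, together with a common feasibility system $\mathcal{S}$ and capacities $x_i$ chosen so that (i) the optimum packs all of $P$ with one grouping of the players and all of $Q$ with a second grouping, giving value $w(P)+w(Q)=(\alpha+1)\,w(Q)$; and (ii) there is a strategy profile in which the players jointly occupy exactly $Q$ and every player's best response is worth exactly $\alpha$ times its current payoff, so that leaving $P$ entirely unclaimed is consistent with the $\alpha$-approximate equilibrium condition. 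The value of that profile is $w(Q)$, whence the ratio $\alpha+1$.

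The step I expect to be the real obstacle is establishing (ii) in the \emph{symmetric} setting. Unlike in Example~\ref{ex:related2}, the ``idle'' players here have exactly the same options $\mathcal{S}$ as everyone else, so I must rule out that any one of them can reclaim the unpacked value of $P$ by recombining its $x_i$ chosen feasible sets. Making this work pins down a delicate interplay between the item weights, the granularity with which $P$ is cut into feasible sets of $\mathcal{S}$, and the capacities $x_i$: each player's marginal opportunity must be exactly a factor $\alpha$, no more and no less. Everything else — optimality of the claimed optimum, the arithmetic of the ratio, and the limiting argument for irrational $\alpha$ — is routine, so once the instance is fixed and (ii) is verified, combining it with the upper bound gives $\poa=\alpha+1$.
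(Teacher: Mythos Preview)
Your upper bound argument is correct and is precisely what the paper does: symmetric set packing games are set packing games, so Theorem~\ref{thm:UB2} applies verbatim.

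For the lower bound there is a genuine gap: you describe the \emph{shape} of a construction but never give one, and the shape you aim for is slightly off. You seek, for each rational $\alpha=p/q$, a single symmetric instance with ratio exactly $\alpha+1$. The paper does not do this, and symmetry is the reason. The paper's instance (Example~\ref{ex:PoAIdenticalSetPacking}) has $n$ identical single-machine players, a pool $Q$ of $q(n-1)+p$ unit-weight jobs with tiny processing time, and a pool $P$ of $n-1$ jobs of weight $p$ and processing time~$1$; a player can hold any subset of $Q$, or a single $P$-job, but not both. In the optimum one player absorbs all of $Q$ and the remaining $n-1$ each take one $P$-job; in the equilibrium \emph{every} player holds $q$ jobs of $Q$, and the best deviation (the $p$ leftover $Q$-jobs, or one $P$-job of weight $p$) is worth exactly $p=\alpha q$. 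The resulting ratio is $\frac{pn+q(n-1)}{qn}=\alpha+1-\frac{1}{n}$, and $\alpha+1$ is attained only as a supremum over $n\to\infty$. The $-\frac{1}{n}$ loss is forced by symmetry: whoever swallows $Q$ in the optimum necessarily forgoes a $P$-job.

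Your discussion of the ``idle'' players also misidentifies the obstacle. In a symmetric game no player can sit at value $0$ in an $\alpha$-approximate equilibrium while items remain unclaimed, since that player has the same feasibility system as everyone else and could pick something up; so the direct analogue of Example~\ref{ex:related2} is impossible. Every player must already hold a positive share of $Q$, and the only question is whether a single player's best deviation into $P$ (or into the leftover of $Q$) exceeds $\alpha$ times that share. Once you accept that all players are active and that the bound is asymptotic in $n$, the construction above falls out almost mechanically; but that construction is the substance of the lower-bound proof, and it is absent from your proposal.
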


The upper bound $\alpha+1$ is a consequence of Theorem~\ref{thm:UB2}. The lower bound follows from the following example, which is again an example where the feasibility sets are defined by a throughput scheduling problem. Symmetry means that all machines are identical.
\begin{example}\label{ex:PoAIdenticalSetPacking} Let $\alpha=p/q$. There are $n$ players~$i$, each corresponding to one machine. The set $J$ of $p+(q+1)(n-1)$ jobs is again partitioned into two sets $P,Q$, $|Q|=q(n-1)+p, |P|=(n-1)$. All jobs $j\in J$ have deadline $d_r=1$. Job $j\in Q$ have processing times $p_j=1/(q(n-1)+p)$ and weight $w_j=1$, while jobs $j\in P$ have processing times $p_j=1$ and weight $w_j=p$. See Figure~\ref{vb3} for an illustration for the case where $p=3,q=2$ and $n=3$ \eoe
\end{example}
\begin{figure}[tbhp]
   \centering
   \includegraphics[width=0.6\textwidth]{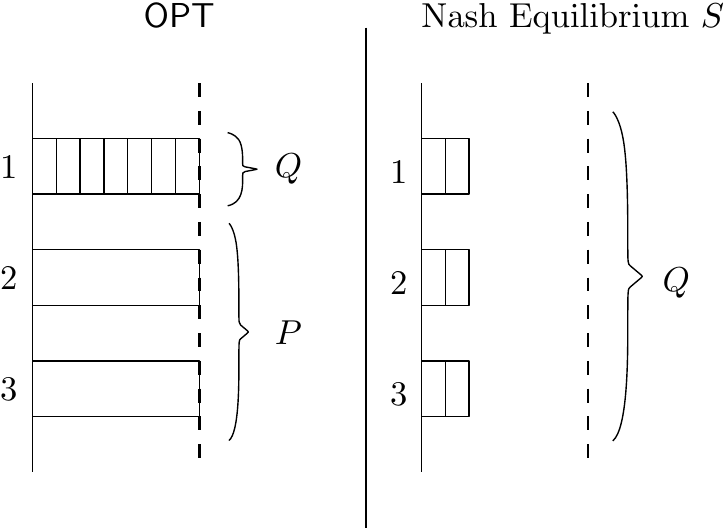}\caption{Example~\ref{ex:PoAIdenticalSetPacking} for $p=3,q=2,n=3$. Numbers represent machines. Rectangles represent jobs. The left side of each job is its starting time, its width is its processing time. The dashed line is the deadline, which is the same for all jobs in this example.}\label{vb3}
\end{figure}
\begin{proof}[of the lower bound]
In the optimum solution $\opt$, player $1$ is allocated all jobs in $Q$, and each other player is allocated exactly one job in $P$. Consider Nash equilibrium $S$ where each player is allocated $q$ jobs in $Q$. Note that $S$ is indeed an $\alpha$-approximate Nash equilibrium: Any player $i$ could choose at most one job from $P$ or at most $p$ jobs from $Q$, since other players are allocated $q(n-1)$ jobs from $Q$ in total. Neither of the feasible deviations increases player $i$'s utility by more than a factor $\alpha$. For this example, $w(\opt)/w(S)=\frac{pn+q(n-1)}{qn}=\frac{p+q}{q}-\frac{1}{n}\to 1+\alpha$ for $n\to\infty$.\qed
\end{proof}
For $\alpha=1$, we obtain the following.
 \begin{corollary}
$\poa=2$ for symmetric set packing games and throughput scheduling games with identical machines.
\end{corollary}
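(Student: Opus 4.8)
The plan is to obtain the corollary as the $\alpha=1$ instance of Theorem~\ref{thm:PoAIdenticalSetPacking}, spelling out why both directions survive the specialization. For the upper bound I would simply quote Theorem~\ref{thm:UB2}: it gives $\poa\le\alpha+1$ for \emph{arbitrary} set packing games under $\alpha$-approximate Nash play, and both symmetric set packing games and throughput scheduling games with identical machines are special cases of set packing games, so setting $\alpha=1$ yields $\poa\le 2$. As remarked in the text, $\alpha=1$ is precisely the regime where players can certify that they are playing an exact best response, so this is the natural specialization to single out.

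For the matching lower bound I would reuse Example~\ref{ex:PoAIdenticalSetPacking} with $p=q$, so that $\alpha=p/q=1$. Then the ground set has $|P|=n-1$ long jobs of weight $p$ and processing time $1$, and $|Q|=q(n-1)+p=pn$ short jobs of unit weight and processing time $1/(pn)$, all sharing a common deadline $1$ on $n$ identical machines. The optimum assigns all of $Q$ to player~$1$ (total load $pn\cdot\tfrac1{pn}=1$, hence feasible) and one long job to each of the remaining $n-1$ players, so $w(\opt)=pn+p(n-1)=2pn-p$. The profile $S$ in which every player takes $q=p$ short jobs has $w(S)=pn$, and it is an \emph{exact} Nash equilibrium: once the other players hold $p(n-1)$ short jobs, a single player can grab at most the $p$ remaining short jobs (value $p$) or one of the free long jobs (value $p$), neither of which exceeds its current value $p=q$. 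Hence $w(\opt)/w(S)=(2pn-p)/(pn)=2-1/n$, which tends to $2$ as $n\to\infty$; since this instance is itself a throughput scheduling game with identical machines, the bound holds for both classes named in the corollary.

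Combining the two directions gives $\poa=2$. I do not expect any real obstacle: the only point requiring a moment's care is confirming that the equilibrium in the specialized example is exact rather than merely $\alpha$-approximate, which is immediate because at $\alpha=1$ the ``within a factor $\alpha$'' slack used in the proof of Theorem~\ref{thm:PoAIdenticalSetPacking} collapses, so that each player's best feasible deviation exactly ties, rather than strictly improves on, its equilibrium payoff.
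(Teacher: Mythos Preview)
Your proposal is correct and follows essentially the same route as the paper: the corollary is just Theorem~\ref{thm:PoAIdenticalSetPacking} at $\alpha=1$, with the upper bound inherited from Theorem~\ref{thm:UB2} and the lower bound from Example~\ref{ex:PoAIdenticalSetPacking}. Your explicit check that the equilibrium in the example is an exact (not merely approximate) Nash equilibrium when $p=q$ is a welcome bit of care that the paper leaves implicit.
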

Note that (for $\alpha=1$) this Nash equilibrium is not subgame perfect in the corresponding sequential game; in any subgame perfect equilibrium, the first player would necessarily choose all jobs from $Q$.

\subsection{Sequential Price of Anarchy}\label{LB3}
In contrast to the asymmetric case, subgame perfect equilibria indeed rule out some of the bad quality Nash equilibria when considering symmetric set packing games. The main result of this section is:
\begin{theorem}\label{thm:ident1}
$\text{The sequential \poa}={\sqrt[\alpha]{e}}/({\sqrt[\alpha]{e}-1})$ for symmetric set packing games, when all players play $\alpha$-approximate subgame perfect equilibria.
\end{theorem}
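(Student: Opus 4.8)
The plan is to establish the stated quantity as matching upper and lower bounds. For the upper bound I would fix an arbitrary order of the players, say $1,\dots,n$, and an arbitrary $\alpha$-approximate subgame perfect equilibrium with played actions $S=(S_1,\dots,S_n)$, and set $W:=w(\opt)$. Using $\emptyset\in\mathcal S$, I may assume the optimum consists of exactly $x=\sum_ix_i$ pairwise disjoint feasible sets $O^1,\dots,O^x$ with $\sum_jw(O^j)=W$. Writing $H_i:=\bigcup_{k<i}S_k$ for the items taken before player $i$ moves and $T_i:=w(H_i)=\sum_{k<i}w(S_k)$ (the $S_k$ are pairwise disjoint), the key step is that when player $i$ moves, picking the $x_i$ most valuable of the $x$ pairwise disjoint, feasible, still-available sets $O^1\setminus H_i,\dots,O^x\setminus H_i$ is a legal action, and the $x_i$ largest of $x$ nonnegative numbers sum to at least $\tfrac{x_i}{x}$ times their total, which here is $\sum_jw(O^j\setminus H_i)=W-w(\opt\cap H_i)\ge W-T_i$. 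Since later rational players cannot reduce player $i$'s realised value, the $\alpha$-approximate subgame perfect property then gives $\alpha\,w(S_i)\ge\tfrac{x_i}{x}(W-T_i)$, hence $W-T_{i+1}=(W-T_i)-w(S_i)\le\bigl(1-\tfrac{x_i}{\alpha x}\bigr)(W-T_i)$; iterating over $i=1,\dots,n$ and using $1-t\le e^{-t}$ yields $W-w(S)=W-T_{n+1}\le W\prod_{i=1}^n\bigl(1-\tfrac{x_i}{\alpha x}\bigr)\le W\,e^{-1/\alpha}$, so $w(\opt)/w(S)\le 1/(1-e^{-1/\alpha})=\sqrt[\alpha]{e}/(\sqrt[\alpha]{e}-1)$, uniformly over orders and equilibria.

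For the matching lower bound I would, given $\varepsilon>0$, choose a large $n$ and build the symmetric game with all $x_i=1$ as follows: the optimum consists of $n$ pairwise disjoint bundles $O^1,\dots,O^n$, each of weight $W/n$ and each a large collection of very light items; in addition, for $i=1,\dots,n$ there is a ``transversal trap'' $G_i$ that grabs a $1/(\alpha n)$ fraction (by weight) of whatever remains of each bundle after $G_1,\dots,G_{i-1}$ are removed, augmented by one junk item of negligible weight when $\alpha=1$ so that $G_i$ becomes strictly the most valuable available feasible set; and $\mathcal S$ is the downward closure of $\{O^1,\dots,O^n,G_1,\dots,G_n\}$. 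Put $D_1:=W$ and $D_{i+1}:=D_i-w(G_i)=(1-\tfrac1{\alpha n})D_i$; then after $G_1,\dots,G_{i-1}$ have been removed each bundle remnant has weight $D_i/n$, so the best available feasible set has value $D_i/n$ while $w(G_i)=D_i/(\alpha n)$, whence for the order $1,\dots,n$ the profile in which player $i$ selects $G_i$ is a legitimate $\alpha$-approximate subgame perfect equilibrium (each node attains at least a $1/\alpha$-fraction of the optimal action value, and later players do not affect player $i$). Its total value is $\sum_iw(G_i)=D_1-D_{n+1}=W\bigl(1-(1-\tfrac1{\alpha n})^n\bigr)\to W(1-e^{-1/\alpha})$ as $n\to\infty$, so the sequential $\poa$ is at least $\sqrt[\alpha]{e}/(\sqrt[\alpha]{e}-1)-\varepsilon$; together with the upper bound this proves the theorem.

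I expect the recursion in the upper bound to be routine; the only thing easy to get wrong there is to bound the remaining optimum by the value already collected, $T_i=\sum_{k<i}w(S_k)$, rather than by the optimum value actually destroyed --- bounding it the latter way only recovers the weaker $\alpha+1$ and does not telescope. The real work lies in the lower bound: realising the shrinking-bundle arithmetic exactly via a fine enough discretisation into light items; checking that in $\mathcal S$ no feasible set other than a $G_i$ or a bundle remnant ever exceeds value $D_i/n$, so that the downward closure of $\{O^j,G_i\}$ has the intended maximum-weight profile and indeed $w(\opt)=W$; and, for $\alpha=1$, inserting the junk items so that the desired suboptimal choices become strictly optimal without inflating $w(\opt)$. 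One might additionally want the instance to arise from identical machines with release dates and deadlines, in the style of Examples~\ref{ex:related2} and \ref{ex:PoAIdenticalSetPacking}; encoding $\mathcal S$ that way is a further, more delicate bookkeeping exercise, but is not needed for the theorem as stated.
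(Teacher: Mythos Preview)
Your proof is correct, and the upper bound is essentially the paper's argument: both establish $\alpha\,w(S_i)\ge \tfrac{x_i}{x}\bigl(w(\opt)-\sum_{k<i}w(S_k)\bigr)$ and telescope. You go directly to $\prod_i(1-\tfrac{x_i}{\alpha x})\le e^{-1/\alpha}$ via $1-t\le e^{-t}$, whereas the paper first proves by a double induction (its Lemmas~\ref{lem:app1} and~\ref{lem:app2}) the closed form $w(S)\ge\bigl(1-(1-\tfrac1{\alpha x})^{x}\bigr)w(\opt)$ and only then passes to the limit; your route is shorter and loses nothing. The lower bounds, however, are genuinely different. The paper builds a throughput scheduling instance (Example~\ref{ex:SPidenticalSPoA}): $n^2$ unit jobs, $n$ of each deadline $1,\dots,n$, with players greedily taking the most flexible jobs first; the analysis goes through the fractions $r_k=\sum_i|S_i\cap J_k|/|S_i|$ and an integral estimate. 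Your transversal-trap construction is more synthetic: the bad equilibrium is engineered so that its arithmetic \emph{is} the upper-bound recursion $D_{i+1}=(1-\tfrac1{\alpha n})D_i$, which makes the analysis a two-line computation and also makes clear why the bound is exactly $1/(1-e^{-1/\alpha})$. What you give up is that your instance is an abstract set packing game rather than a scheduling instance, so the paper's construction additionally shows tightness already for the identical-machines throughput subclass; you correctly flag this as extra bookkeeping not required for the theorem as stated. The caveats you list (discretisation into light items, verifying the maximal available set at step $i$ has weight $D_i/n$, tie-breaking at $\alpha=1$) are all routine; in particular, since the feasibility system is the downward closure of $\{O^1,\dots,O^n,G_1,\dots,G_n\}$ and the $G_k$ with $k\ge i$ are pairwise disjoint from $G_1,\dots,G_{i-1}$, the only maximal available sets at step $i$ are the bundle remnants (weight $D_i/n$) and the later traps (weight $\le D_i/(\alpha n)$), so the check goes through.
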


First we prove the lower bound, which is again a throughput scheduling instance.
\begin{example}\label{ex:SPidenticalSPoA}
There are $n$ players. Each player $i$ corresponds to one machine. The set $J$ of $n^2$ jobs is partitioned into $n$ sets $J_1,\dots,J_n$, $|J_k|=n$ for all $k\in [n]$. We refer to a job from $J_k$ as a $k$-job. All $k$-jobs have deadline $k$. All jobs $j\in J$ have processing time $p_j=1$ and weight $w_j=1$. See Figure~\ref{vb4} for an illustration for the case where $n=5$ and $\alpha=1$. \eoe
\end{example}
\begin{lemma}\label{thm:LBident1}
The sequential $\text{\poa}\geq{\sqrt[\alpha]{e}}/({\sqrt[\alpha]{e}-1})$ for identical set packing games, when all players play $\alpha$-approximate subgame perfect equilibria.
\end{lemma}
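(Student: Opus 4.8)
The plan is to exhibit, for the instance in Example~\ref{ex:SPidenticalSPoA} (played in the natural order $1,\dots,n$; by symmetry the order is immaterial), one particular $\alpha$-approximate subgame perfect equilibrium whose social value is only an $\bigl(1-e^{-1/\alpha}\bigr)$-fraction of the optimum as $n\to\infty$, so that the ratio tends to $1/(1-e^{-1/\alpha})=\sqrt[\alpha]{e}/(\sqrt[\alpha]{e}-1)$. Two facts are used throughout. First, $w(\opt)=n^2$: split each class $J_k$ evenly over the $n$ machines, so every machine runs exactly one job of each deadline $1,\dots,n$; sequencing these in EDD order is feasible, hence all $n^2$ jobs are accepted. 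Second, because this is a set packing game, the reasoning behind Theorem~\ref{thm:SPE=NE} shows that the move of the $i$-th mover is not influenced by later movers, so in any ($\alpha$-approximate) subgame perfect equilibrium the $i$-th mover simply picks a feasible subset of the still-available jobs whose weight is within a factor $\alpha$ of the largest available feasible weight; and since all processing times are $1$, a job set is feasible on a single machine exactly when it meets Hall's condition (at most $t$ of its jobs have deadline $\le t$, for every $t\in[n]$).

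Next I would define the equilibrium play recursively while tracking the residual profile $b=(b_1,\dots,b_n)$ of still-available jobs per class. Each mover is to take a feasible available set of weight $\ge(1/\alpha)$ times the largest available feasible weight, chosen — among the many sets meeting this bound — to be ``concentrated in the latest available deadlines'' as much as the equilibrium constraint permits, so that $b$ keeps a prescribed staircase shape governed by a single parameter. From that invariant I would (i) read off the largest available feasible weight $M_i$ that mover $i$ faces, using the identity that this maximum equals $\min_{0\le t\le n}\bigl[t+\sum_{k>t}b_k\bigr]$; (ii) verify that the prescribed action genuinely attains $\ge M_i/\alpha$ and that no available feasible set has weight exceeding $M_i$; and (iii) check that the invariant is restored for mover $i+1$. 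The recursion should reduce, up to rounding, to $M_{i+1}=M_i\bigl(1-\tfrac1{\alpha n}\bigr)$ — intuitively, each mover banks a $1/\alpha$-share of the current ``effective capacity'' and squanders the rest — so mover $i$ collects about $\tfrac1\alpha\,n\,\bigl(1-\tfrac1{\alpha n}\bigr)^{\,i-1}$ jobs.

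Summing the geometric series yields an equilibrium value $\approx\sum_{i=1}^{n}\tfrac1\alpha n\bigl(1-\tfrac1{\alpha n}\bigr)^{i-1}=n^2\Bigl(1-\bigl(1-\tfrac1{\alpha n}\bigr)^{n}\Bigr)$, which tends to $n^2\bigl(1-e^{-1/\alpha}\bigr)$ as $n\to\infty$; dividing $w(\opt)=n^2$ by it gives a sequential price of anarchy tending to $\sqrt[\alpha]{e}/(\sqrt[\alpha]{e}-1)$, and letting $n$ grow makes the gap to the bound arbitrarily small. Irrational $\alpha$ is handled by approximating it from below by rationals $p/q$, exactly as in the remark after Example~\ref{ex:related2}.

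I expect the main obstacle to be steps (ii)--(iii): the staircase invariant must be chosen so that the prescribed move is simultaneously a legitimate $\alpha$-approximate best response at every node — which requires an actual upper bound on the weight of the largest remaining feasible set, not just a witness for a large one — and is reproduced, so that the clean recursion $M_{i+1}=M_i(1-\tfrac1{\alpha n})$ survives with only lower-order integrality errors that vanish in the limit. Completing the strategy profile at the unreached nodes is routine, since $\emptyset$ is always feasible and hence no player is ever pushed to a $-\infty$ payoff, so any such completion is ``rational'' in the sense needed for the argument behind Theorem~\ref{thm:SPE=NE}.
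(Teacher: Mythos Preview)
Your setup matches the paper: same instance (Example~\ref{ex:SPidenticalSPoA}), same greedy equilibrium construction (each mover computes her maximum feasible size $m_i$ over the still-available jobs and keeps $\lceil m_i/\alpha\rceil$ of them, chosen from the largest deadlines downward), and the same observation that $w(\opt)=n^2$. The gap is your step (iii): the recursion $M_{i+1}=M_i\bigl(1-\tfrac1{\alpha n}\bigr)$ is not what the staircase invariant actually produces, and the discrepancy is structural, not a rounding artefact. Take $\alpha=1$. After player~1 empties $J_n$, one checks directly from your own formula $M=\min_t\bigl[t+\sum_{k>t}b_k\bigr]$ that $M_2=n-1$, $M_3=n-1$, $M_4=n-2$, $M_5=n-3$, \dots, i.e.\ a \emph{linear} decay $M_i=n-i+2$ (for $i\ge2$, until the pattern breaks), not a geometric one. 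In particular your recursion systematically \emph{under}estimates $M_i$ and hence $w(S)=\sum_i M_i$; for $n=5$ it predicts about $16.8$ while the construction actually allocates $18$ jobs. Since the lower-bound argument needs an \emph{upper} bound on $w(S)$, this is the wrong direction, and the limiting claim is not justified by the recursion.

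The paper avoids tracking $M_i$ altogether. It works with the per-class contributions $r_k:=\sum_i |S_i\cap J_k|/|S_i|$, notes $\sum_k r_k=n$, and observes that in this construction any player receiving a $k$-job has $|S_i|\le (k+1+\alpha)/\alpha$; hence every allocated $k$-job contributes at least $\alpha/(k+1+\alpha)$ to $r_k$. For a fully allocated class this gives $r_k\ge n\alpha/(k+1+\alpha)$, and comparing $\sum_k r_k\le n$ with the integral $\int \tfrac{n\alpha}{k+1+\alpha}\,dk$ caps the number $k'$ of fully allocated classes by roughly $(n+1+\alpha)\bigl(1-e^{-1/\alpha}\bigr)$. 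Since by construction only the top $k'+1$ classes are touched, $w(S)\le (k'+1)n$, and the ratio $w(\opt)/w(S)$ tends to $\sqrt[\alpha]{e}/(\sqrt[\alpha]{e}-1)$. This aggregate counting is the missing idea that replaces your per-player recursion.
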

\begin{proof}

\begin{figure}[tbhp]
   \centering
    \includegraphics[width=0.8\textwidth]{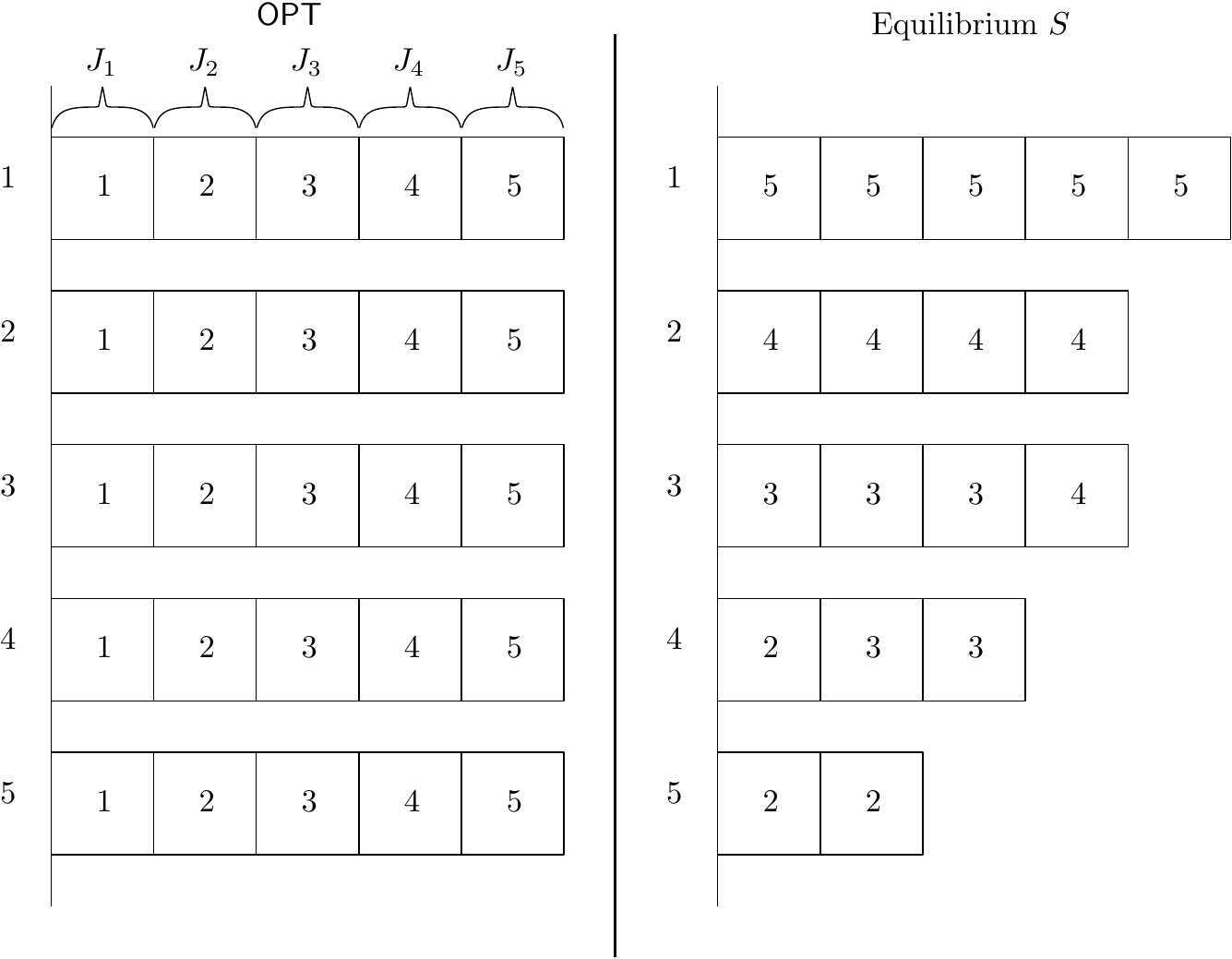}\caption{Example~\ref{ex:SPidenticalSPoA} in case of 5 players and $\alpha=1$. Numbers represent machines. Rectangles represent jobs. The left side of each job is its starting time, its width is its processing time. The number in each job is its deadline.}\label{vb4}

\end{figure}

In the optimum solution $\opt$, every player is allocated exactly one $k$-job for all $k=1,\dots,n$. Therefore $w(\opt)=n^2$.

We construct an $\alpha$-approximate subgame perfect outcome $S$, as follows: For every player $i=1,\dots,n$ in this order, we find the maximum number of jobs that can be feasibly allocated to this player, given jobs already assigned to players $1,\dots,i-1$, and when considering jobs with the largest deadlines first (which are the most flexible jobs). Denote this number of jobs $m_i$.
We allocate to player $i$ exactly $\lceil{m_i}/{\alpha}\rceil$ of these jobs, so that the allocation is still an $\alpha$-approximation. Let $S_i$ be the jobs allocated to player $i$ in this way.

We bound $w(S)$ in the following way: Let $r_k(i)=\frac{|S_i\cap J_{k}|}{|S_i|}$, i.e.\ $r_k(i)$ is the fraction of $k$-jobs allocated to player $i$, relative to the total number of jobs allocated to player $i$. Let $r_k=\sum_{i=1}^n r_k(i)$.
Now,
\begin{equation}\label{eq:sumdelta}
\sum_{k=1}^n r_k=\sum_{k=1}^n \sum_{i=1^n} r_k(i)=\sum_{i=1}^n \sum_{k=1}^n \frac{|S_i\cap J_k|}{|S_i|}=\sum_{i=1}^n1=n\,.
\end{equation}
In $S$, any player $i$ who gets allocated a $k$-job, is not allocated any job from $J_{j},j\geq k+2$, hence she is allocated at most $\lceil{(k+1)}/{\alpha}\rceil \leq {(k+1+\alpha)}/{\alpha}$ jobs. Therefore, each $k$-job contributes at least ${\alpha}/{(k+1+\alpha)}$ to $r_k$. For any $k$ for which all of the $n$ $k$-jobs are allocated in $S$, we obtain
\begin{equation}\label{eq:rdeltaLB}
r_k \geq {n\alpha}/{(k+1+\alpha)}\,.
\end{equation}

Now, for some $k'\geq 0$, by construction of the allocation we have that for all $k\geq n-k'$, all $n$ $k$-jobs are allocated, as well as a subset of the $(n-(k'+1))$-jobs. We obtain

\begin{equation}\label{proof:id}
n\geq\sum_{k=n-k'}^{n} r_k \geq \sum_{k=n-k'}^{n} \frac{n\alpha}{k+1+\alpha} \geq \int_{k=n-k'}^{n}\frac{n\alpha}{k+1+\alpha}dk\,\,,
\end{equation}
where the first inequality follows from \eqref{eq:sumdelta}, the second inequality follows from \eqref{eq:rdeltaLB}, and the last inequality follows from basic calculus.

Because the last term is upper bounded by $n$, we can derive an upper bound on~$k'$. In fact, basic calculus shows that \[
k'>\frac{(n+1+\alpha)(\sqrt[\alpha]{e}-1)}{\sqrt[\alpha]{e}}\ \Rightarrow\ \int_{k=n-k'}^{n}\frac{n\alpha}{k+1+\alpha}dk > n\,, 
\] 
which together with \eqref{proof:id} yields that
$k'\leq\frac{(n+1+\alpha)(\sqrt[\alpha]{e}-1)}{\sqrt[\alpha]{e}}$. Because only $k$-jobs with $k\geq n-(k'+1)$ are allocated, we conclude that 
\[ 
w(S)\leq (k' +1)n\leq
\frac{(n+1+\alpha+\frac{\sqrt[\alpha]{e}}{\sqrt[\alpha]{e}-1})(\sqrt[\alpha]{e}-1)}{\sqrt[\alpha]{e}}\cdot n\,. \] We see that \[ \frac{w(\opt)}{w(S)} \geq
\frac{n\sqrt[\alpha]{e}}{(n+1+\alpha+\frac{\sqrt[\alpha]{e}}{\sqrt[\alpha]{e}-1})(\sqrt[\alpha]{e}-1)}\to\frac{\sqrt[\alpha]{e}}{\sqrt[\alpha]{e}-1}\quad\text{for }\ n\to\infty\,, \] and the claim
follows.\qed
\end{proof}

Note that the lower bound construction assumes that players choose the most flexible jobs first, which seems reasonable from a practical point of view. Also note that in the lower bound example, $x_i=1$ for all players. Therefore, the lower bound holds even in the special case when the strategic form game is symmetric.

To derive a matching upper bound on the sequential price of anarchy for symmetric set packing games, we use a proof idea from Bar-Noy~et al.~\cite{BarNoy2001} in their analysis of $k$-GREEDY, but we generalize it for the case where $x_i> 1$ for some players $i$  (e.g., player $i$ controls multiple identical machines in the throughput scheduling setting).

We want to prove:
\begin{theorem}\label{thm:UBident1}
The sequential $\text{\poa}\leq{\sqrt[\alpha]{e}}/({\sqrt[\alpha]{e}-1})$ for symmeyric set packing games, when all players play $\alpha$-approximate subgame perfect equilibria.
\end{theorem}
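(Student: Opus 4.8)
The plan is to mirror, in the opposite direction, the accounting used in the proof of Lemma~\ref{thm:LBident1}, adapting the greedy analysis of Bar-Noy et al.~\cite{BarNoy2001} to the present setting, where a player $i$ commits a \emph{bundle} of $x_i$ disjoint feasible sets rather than a single set. Fix any $\alpha$-approximate subgame perfect equilibrium outcome $S=(S_1,\dots,S_n)$, with the players ordered $1,\dots,n$, and fix an optimal solution $\opt$, which we view as a family of $x=\sum_{i=1}^n x_i$ pairwise disjoint feasible sets from $\mathcal S$. Writing $U_{i-1}=\bigcup_{j<i}S_j$ for the items used before player $i$ moves, the subgame perfect condition together with the specific payoff structure of set packing games (player $i$'s value is unaffected by rational later players, cf.\ Theorem~\ref{thm:SPE=NE}) gives
\[
\alpha\, w(S_i)\ \ge\ \max\Big\{\, w(T)\ :\ T=\textstyle\bigcup_{\ell=1}^{x_i}A_\ell,\ A_\ell\in\mathcal S\ \text{pairwise disjoint},\ T\subseteq J\setminus U_{i-1}\,\Big\}\,.
\]
Since the optimal sets disjoint from $U_{i-1}$ form exactly such a candidate collection, letting $c_{i-1}$ denote the number of optimal sets disjoint from $U_{i-1}$ and $\Phi_{i-1}$ their total weight, $\alpha\,w(S_i)$ is at least the weight of the $\min\{x_i,c_{i-1}\}$ heaviest of them, hence at least $\frac{\min\{x_i,c_{i-1}\}}{c_{i-1}}\,\Phi_{i-1}$.

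Next I would set up the charge: assign each optimal set to the first player whose choice intersects it, let $D_i$ be the optimal sets charged to player $i$, and put $\Phi_i=\Phi_{i-1}-w(D_i)$, so that $\Phi_0=w(\opt)$ and $c_i=c_{i-1}-|D_i|$. The aim is a recursion showing that the total extracted weight $w(S)=\sum_i w(S_i)$ cannot be too small relative to $w(\opt)$. Following the idea of~\cite{BarNoy2001}, the worst case is when the as-yet-available optimal weight is spread as evenly as possible over the $c_{i-1}\le x$ surviving optimal sets, and a bundle $S_i$ destroys many light optimal sets while extracting comparatively little; pushing this through, the cumulative loss $w(\opt)-w(S)$ is controlled by a discrete sum of the form $\sum \frac{x_i\,\alpha}{(\text{surviving count})+\alpha}$, which, estimated from above by an integral $\int \frac{\alpha}{t+\alpha}\,dt$ exactly as in~\eqref{proof:id}, yields $w(S)\ \ge\ \big(1-e^{-1/\alpha}\big)\,w(\opt)=\big(1-1/\sqrt[\alpha]{e}\big)\,w(\opt)$, i.e.\ the claimed bound $w(\opt)/w(S)\le \sqrt[\alpha]{e}/(\sqrt[\alpha]{e}-1)$.

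The main obstacle is precisely this closing step: arguing rigorously that the ``evenly spread'' configuration is the worst case and that the accumulated leakage is bounded by the continuous integral, whose value being at most $1$ is what converts into the factor $e^{1/\alpha}=\sqrt[\alpha]{e}$ (this is the same analytic estimate that drives the lower bound, now run the other way). A second, more technical complication, and the place where we genuinely have to generalise the single-machine analysis of~\cite{BarNoy2001}, is carrying the bundle size $x_i>1$ through the bookkeeping: the $\alpha$-approximation is imposed on the whole bundle of $x_i$ sets rather than set by set, and the accounting has to stay tight uniformly in the $x_i$ so that the estimate does not degrade to the weaker bound $\alpha+1$ of Theorem~\ref{thm:UB2}.
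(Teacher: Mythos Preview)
Your proposal is not a complete proof—you say so yourself—and the route you sketch is substantially more delicate than what the paper actually does. The paper sidesteps both of your acknowledged obstacles by \emph{not} tracking the surviving count $c_{i-1}$ of optimal sets. Instead it uses the coarser fixed-denominator bound (Lemma~\ref{lem:WSI_})
\[
w(S_i)\ \ge\ \frac{x_i}{x\alpha}\,w\bigl(\opt(J\setminus U_{i-1})\bigr)\ \ge\ \frac{x_i}{x\alpha}\Bigl(w(\opt)-\sum_{j<i}w(S_j)\Bigr),
\]
the first inequality because player $i$ can take the $x_i$ heaviest among the (at most $x$) sets of an optimum on $J\setminus U_{i-1}$, the second because $\opt$ restricted to $J\setminus U_{i-1}$ is feasible there by downward closedness. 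Setting $\gamma=x\alpha$ and $a_i=\sum_{j\le i}w(S_j)$, this is the linear recursion $a_i\ge \tfrac{x_i}{\gamma}\,w(\opt)+\tfrac{\gamma-x_i}{\gamma}\,a_{i-1}$. Solving it by induction (the paper does this via two short lemmas, which together amount to the elementary estimate $1-\tfrac{x_i}{\gamma}\le (1-\tfrac{1}{\gamma})^{x_i}$) gives $w(S)\ge\bigl(1-(1-\tfrac{1}{\gamma})^{x}\bigr)\,w(\opt)$, and $(1-\tfrac{1}{x\alpha})^{x}\le e^{-1/\alpha}$ finishes.

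So your charging scheme and the ``evenly spread worst case'' step are simply not needed: replacing the variable denominator $c_{i-1}$ by the constant $x$ loses something at each step but makes the recursion exactly solvable, and the product $(1-\tfrac{1}{x\alpha})^{x}$ already hits the target $e^{-1/\alpha}$. Your plan to run the lower-bound integral~\eqref{proof:id} in reverse is appealing but does not close as stated: there is no a priori control on how many optimal sets $S_i$ destroys (your $|D_i|$) relative to the weight $w(S_i)$ it extracts, so the term $\tfrac{x_i\alpha}{(\text{surviving count})+\alpha}$ you write down is not shown to bound any identified quantity in your argument. That is the genuine gap, and the paper's approach avoids it altogether.
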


Denote by $S_i$ the items selected by player $i$ in an $\alpha$-approximate subgame perfect equilibrium, and recall that $S$ denotes both the strategy vector and $S=\cup_{i=1}^n S_i$, the total set of selected items. The following lemma lower bounds the total weight collected by player~$i$.
\begin{lemma}\label{lem:WSI_}
We have for all players $i$
\[w(S_i)\geq \frac{x_i}{x\alpha} w\left(\opt\left(J\setminus \cup_{j<i}S_j\right)\right)\,.\]
where  $\opt(W)$ denotes an optimal solution for any subset of items $W\subseteq J$.
\end{lemma}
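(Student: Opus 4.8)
The plan is to combine two ingredients. The first is the structural observation, already used in the proof of Theorem~\ref{thm:SPE=NE}, that in an ($\alpha$-approximate) subgame perfect equilibrium of a sequential set packing game the value collected by player~$i$ does not depend on the (rational) continuation of the game; consequently an optimal action for player~$i$ at her node is simply to maximize her own value among the items still available. The second is an elementary averaging argument that lets player~$i$ secure her ``fair share'' $x_i/x$ of an optimal packing of the still available items.

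Concretely, I would fix an $\alpha$-approximate subgame perfect equilibrium, a player~$i$, and write $W_i := J\setminus\bigcup_{j<i}S_j$ for the set of items not yet claimed when $i$ moves. Because each rational later player $j>i$ never selects a feasible set intersecting a set already chosen by $i$ (or by any deviation available to $i$), the payoff of player~$i$ along the equilibrium path equals $w(S_i)$ regardless of what players $i+1,\dots,n$ do; this is exactly the argument in Theorem~\ref{thm:SPE=NE}. Hence an optimal action for player~$i$ at her node is a strategy $T_i^\star$ --- i.e.\ a union of $x_i$ pairwise disjoint members of $\mathcal{S}$, all contained in $W_i$ --- of maximum total weight $w(T_i^\star)$, and the definition of $\alpha$-approximate subgame perfect equilibrium yields $\alpha\,w(S_i)\ge w(T_i^\star)$.

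It remains to lower bound $w(T_i^\star)$ by $\frac{x_i}{x}\,w(\opt(W_i))$. In the symmetric model, $\opt(W_i)$ is a collection $\{T^1,\dots,T^x\}$ of $x$ pairwise disjoint members of $\mathcal{S}$ contained in $W_i$ of maximum total weight, so that $w(\opt(W_i))=\sum_{\ell=1}^x w(T^\ell)$; order the indices so that $w(T^1)\ge\cdots\ge w(T^x)$. Then $\{T^1,\dots,T^{x_i}\}$ is itself a feasible strategy for player~$i$ inside $W_i$ (a sub-collection of pairwise disjoint feasible sets is again such a collection), whence $w(T_i^\star)\ge\sum_{\ell=1}^{x_i}w(T^\ell)$. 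Since the $x_i$ largest among the $x$ nonnegative numbers $w(T^1),\dots,w(T^x)$ account for at least an $x_i/x$ fraction of their sum, we obtain $w(T_i^\star)\ge\sum_{\ell=1}^{x_i}w(T^\ell)\ge\frac{x_i}{x}\,w(\opt(W_i))$. Chaining this with $\alpha\,w(S_i)\ge w(T_i^\star)$ gives $w(S_i)\ge\frac{x_i}{x\alpha}\,w(\opt(W_i))$, as claimed.

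The only point needing care is the first step: spelling out that, in the sense used to define $\alpha$-approximate subgame perfection, player~$i$'s ``optimal action'' in her subgame is precisely the myopic maximizer over $W_i$. This is the reasoning behind Theorem~\ref{thm:SPE=NE} and hinges on the peculiar payoff structure of set packing games (claiming extra items never harms an earlier player); the averaging bound and the closure properties of feasible strategies are routine.
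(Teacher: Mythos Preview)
Your proposal is correct and follows essentially the same route as the paper's proof: define the remaining item set $W=W_i$, let $T_i^\star=\opt^i$ be player~$i$'s best attainable set from $W$, use the $\alpha$-approximate subgame perfect condition to get $\alpha\,w(S_i)\ge w(T_i^\star)$, and then lower bound $w(T_i^\star)$ by $\tfrac{x_i}{x}\,w(\opt(W))$ via the averaging argument of picking the $x_i$ most valuable of the $x$ feasible sets comprising $\opt(W)$. The only difference is that you spell out more carefully why the subgame perfect condition reduces to the myopic one (invoking the reasoning of Theorem~\ref{thm:SPE=NE}), whereas the paper simply asserts it.
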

\begin{proof}
Let $W:=J\setminus \cup_{j<i}S_j$.
Let $\opt^i$ denote the maximum weight set of items that player $i$ can achieve from $W$. Observe that $w(\opt^i)\geq (x_i/x) w(\opt(W))$.
This follows because player $i$ could potentially select the $x_i$ most valuable feasible sets from $\opt(W)$.   Now, because we assume an $\alpha$-approximate subgame perfect equilibrium, 
$w(S_i)\geq {w(\opt^i)}/{\alpha}\geq {x_i w(\opt(W))}/{(x\alpha)}$.
\qed
\end{proof}

\begin{proof}[of Theorem~\ref{thm:UBident1}]
Let $\gamma:=x\alpha$, and recall that $w(\opt)=w(\opt(J))$ denotes the value of an optimal solution. 
We use Lemma~\ref{lem:WSI_}, to get
\begin{equation*}
w(S_i) \geq  \frac{x_i}{\gamma}w\left(\opt\left(J\setminus \bigcup\nolimits_{j<i}S_j\right)\right)
\geq  \frac{x_i}{\gamma}\left(w(\opt) - \sum\nolimits_{j<i}w(S_j)\right)\,,
\end{equation*}
where the latter inequality holds because $w(\opt) - \sum\nolimits_{j<i}w(S_j)$ represents the value of a feasible solution for the items $J\setminus \bigcup\nolimits_{j<i}S_j$.
Add $\sum_{j=1}^{i-1}w(S_j)$ to both sides to get
\begin{equation}
\sum_{j=1}^{i}w(S_j)\geq \frac{x_i w(\opt)}{\gamma}+\frac{\gamma-x_i}{\gamma}\sum_{j=1}^{i-1}w(S_j)\label{e1_}\,.
\end{equation}
We want to prove by induction on $i$ that
\begin{equation}\label{eq:double_ind}
\sum_{j=1}^{i}w(S_j)\geq \frac{\gamma^{x'_i}-(\gamma-1)^{x'_i}}{\gamma^{x'_i}}w(\opt)\,,
\end{equation}
where $x'_i=\sum_{j=1}^i x_j$. 

The base case $i=1$ is the following lemma, proved by yet another inductive argument on $x_1$.
\begin{lemma}\label{lem:app1}
\[w(S_1)\geq \frac{\gamma^{x_1}-(\gamma-1)^{x_1}}{\gamma^{x_1}}w(\opt)\,.\]\label{AL1}
\end{lemma}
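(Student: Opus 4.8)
The plan is to prove Lemma~\ref{lem:app1} by induction on $x_1$, the number of feasible sets that player~$1$ is allowed to pick. Since the game is symmetric, player~$1$ faces the whole ground set $J$ and can be thought of as running $x_1$ rounds of greedy-like selection internally: in the $t$-th round she adds one more feasible set from $\mathcal{S}$, and because she plays an $\alpha$-approximate best response over all collections of $x_1$ feasible sets, each incremental set captures at least a $1/(x\alpha)=1/\gamma$ fraction of the value still reachable. Concretely, I would set $\mathrm{OPT}=w(\opt)$ and let $S_1^{(t)}$ denote the value accumulated after $t$ of the $x_1$ internal picks; the key one-step inequality mirrors~\eqref{e1_} with $x_i$ replaced by $1$: $S_1^{(t)} \ge \frac{1}{\gamma}(\mathrm{OPT}-S_1^{(t-1)}) + S_1^{(t-1)}$, i.e.\ $S_1^{(t)} \ge \frac{1}{\gamma}\mathrm{OPT} + \frac{\gamma-1}{\gamma}S_1^{(t-1)}$.

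The inductive hypothesis is $S_1^{(t-1)} \ge \frac{\gamma^{t-1}-(\gamma-1)^{t-1}}{\gamma^{t-1}}\mathrm{OPT}$. Plugging this into the one-step inequality and simplifying the coefficient of $\mathrm{OPT}$:
\begin{equation*}
\frac{1}{\gamma} + \frac{\gamma-1}{\gamma}\cdot\frac{\gamma^{t-1}-(\gamma-1)^{t-1}}{\gamma^{t-1}}
= \frac{\gamma^{t-1} + (\gamma-1)\gamma^{t-1} - (\gamma-1)^{t}}{\gamma^{t}}
= \frac{\gamma^{t}-(\gamma-1)^{t}}{\gamma^{t}}\,,
\end{equation*}
which is exactly the claim for $t$. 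After $x_1$ rounds this gives $w(S_1)=S_1^{(x_1)} \ge \frac{\gamma^{x_1}-(\gamma-1)^{x_1}}{\gamma^{x_1}}\mathrm{OPT}$, as desired. The base case $t=1$ is simply $w(S_1)\ge \frac{1}{\gamma}\mathrm{OPT}$, which is Lemma~\ref{lem:WSI_} with $i=1$ and $x_i=1$ applied to the single best feasible set (or directly: $\frac{\gamma^1-(\gamma-1)^1}{\gamma^1}=\frac{1}{\gamma}$).

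The one subtle point — and the main thing to get right rather than the main obstacle — is justifying the one-step recursion. One must argue that an $\alpha$-approximate best response over collections of $x_1$ feasible sets dominates what the following greedy-style "add one more set" argument produces: if after $t-1$ internally-chosen sets the residual optimum still has value at least $\mathrm{OPT}-S_1^{(t-1)}$ (true because $\opt$ minus the already-grabbed items is feasible on the remaining ground set), then some single additional feasible set from $\mathcal{S}$ grabs at least a $1/x$ fraction of that residual optimum (the $x$ sets of $\opt(W)$ cover it, so the best one grabs $\ge 1/x$; here we only need $1/x_1\le 1/x$ worth, and downward-closedness lets player~$1$ restrict to exactly one set), and the $\alpha$-approximation costs another factor $1/\alpha$, giving $1/\gamma$. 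I would phrase this cleanly by noting that the collection player~$1$ actually plays, being an $\alpha$-approximate best response, has value at least $1/\alpha$ times the best collection of $x_1$ sets, and the best collection of $x_1$ sets on $J$ has value at least the greedy lower bound $\frac{\gamma^{x_1}-(\gamma-1)^{x_1}}{\gamma^{x_1}}\cdot x\cdot\mathrm{OPT}$ divided appropriately — but it is cleanest simply to run the induction on the internal greedy process and invoke $\alpha$-approximation once at the end on each increment, exactly as above. This is the same telescoping structure used later in~\eqref{eq:double_ind}, so the computation here is its single-machine specialization.
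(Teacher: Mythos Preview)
Your proposal has a genuine gap in justifying the one-step recursion
\[
S_1^{(t)} \ge \tfrac{1}{\gamma}\bigl(\mathrm{OPT}-S_1^{(t-1)}\bigr) + S_1^{(t-1)}\,.
\]
The $\alpha$-approximation is a property of player~1's \emph{overall} choice of $x_1$ sets, not of each ``internal pick''. There is no decomposition of $S_1$ into a sequence of $\alpha$-approximate single-set choices, so you cannot charge a factor $1/\alpha$ at every step. Your fallback---run \emph{exact} greedy for $x_1$ rounds (each round grabbing $\ge 1/x$ of the residual) and divide by $\alpha$ once at the end---yields only
\[
w(S_1)\ \ge\ \tfrac{1}{\alpha}\cdot\frac{x^{x_1}-(x-1)^{x_1}}{x^{x_1}}\,w(\opt)\,,
\]
and this is in general \emph{strictly smaller} than the target $\frac{\gamma^{x_1}-(\gamma-1)^{x_1}}{\gamma^{x_1}}w(\opt)$ with $\gamma=x\alpha$. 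For instance, with $x=3$, $\alpha=2$, $x_1=2$ one gets $\tfrac12\cdot\tfrac{5}{9}=\tfrac{10}{36}$ versus the required $\tfrac{11}{36}$. So neither reading of your argument delivers the lemma.

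The paper's proof avoids any internal decomposition of $S_1$ altogether. It simply invokes Lemma~\ref{lem:WSI_} with $i=1$ once to get the linear bound $w(S_1)\ge \frac{x_1}{\gamma}\,w(\opt)$, and then proves the purely algebraic inequality
\[
\frac{x_1}{\gamma}\ \ge\ \frac{\gamma^{x_1}-(\gamma-1)^{x_1}}{\gamma^{x_1}}
\]
by an easy induction on $x_1$ (equivalently, this is just Bernoulli's inequality $(1-1/\gamma)^{x_1}\ge 1-x_1/\gamma$). In other words, the iterated-greedy recursion you set up is not needed at all for the base case; the single application of Lemma~\ref{lem:WSI_} already gives a bound that dominates the target.
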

\begin{proof} We know by definition of $\gamma$, and by plugging $i=1$ into Lemma~\ref{lem:WSI_} that
\begin{equation*}
{w(S_1)} \geq \frac{x_1}{\gamma} {w(\opt)}\,.
\end{equation*}
Hence we are done when we can prove by induction on $x_1$ that
\begin{equation*}
\frac{x_1}{\gamma}\geq \frac{\gamma^{x_1}-(\gamma-1)^{x_1}}{\gamma^{x_1}}\,.
\end{equation*}
When $x_1=1$, we get
\begin{equation*}
\frac{1}{\gamma}\geq \frac{\gamma-(\gamma-1)}{\gamma}=\frac{1}{\gamma}\,,
\end{equation*}
which clearly holds. Assume the claim holds for $x_1=k-1$. We get
\begin{align*}
\frac{k}{\gamma}&=
\frac{k-1}{\gamma}+\frac{1}{\gamma}\\
&\geq\frac{\gamma^{k-1}-(\gamma-1)^{k-1}}{\gamma^{k-1}}+\frac{1}{\gamma}\\
&=\frac{\gamma^{k}-(\gamma-1)^{k}-{(\gamma-1)}^{k-1}+{\gamma}^{k-1}}{\gamma^{k}}\\
&\geq \frac{\gamma^{k}-(\gamma-1)^{k}}{\gamma^{k}}\,,
\end{align*}
proving Lemma~\ref{lem:app1}.\qed
\end{proof}
%
%
Assume now that \eqref{eq:double_ind} holds for $i-1$. Applying the induction hypothesis to $(\ref{e1_})$ we get
\begin{equation*}
\sum_{j=1}^{i}w(S_{j})\geq \frac{x_i w(\opt)}{\gamma}+\frac{\gamma-x_i}{\gamma}\cdot\frac{\gamma^{x'_{i-1}}-(\gamma-1)^{x'_{i-1}}}{\gamma^{x'_{i-1}}}w(\opt)\,.
\end{equation*}
This can be used to prove the inductive claim, using the following.
\begin{lemma}\label{lem:app2}
\[
\frac{x_k}{\gamma}+\frac{\gamma-x_k}{\gamma}\cdot\frac{\gamma^{x'_{k-1}}-(\gamma-1)^{x'_{k-1}}}{\gamma^{x'_{k-1}}}\geq\frac{\gamma^{x'_{k}}-(\gamma-1)^{x'_{k}}}{\gamma^{x'_{k}}}\,.
\]
\end{lemma}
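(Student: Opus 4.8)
The plan is to eliminate the binomial-looking ratios by the substitution $t:=\tfrac{\gamma-1}{\gamma}$, under which $\tfrac{\gamma^{m}-(\gamma-1)^{m}}{\gamma^{m}}$ becomes $1-t^{m}$, and then to reduce the whole inequality to Bernoulli's inequality. Write $a:=x'_{k-1}$ and $b:=x_k$, so that $x'_k=a+b$ with $a\ge 0$ and $b\ge 1$. Then the left-hand side of Lemma~\ref{lem:app2} equals
\[
\frac{b}{\gamma}+\frac{\gamma-b}{\gamma}\bigl(1-t^{a}\bigr)
=\frac{b+(\gamma-b)}{\gamma}-\frac{\gamma-b}{\gamma}\,t^{a}
=1-\frac{\gamma-b}{\gamma}\,t^{a},
\]
while its right-hand side is exactly $1-t^{a+b}$. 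Hence the claimed inequality is equivalent to $t^{a+b}\ge\tfrac{\gamma-b}{\gamma}\,t^{a}$, and, after factoring out the nonnegative quantity $t^{a}$, to
\[
t^{b}\ \ge\ 1-\frac{b}{\gamma}\,.
\]

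To finish, recall $\gamma=x\alpha\ge n\ge 1$, so $t=1-\tfrac1\gamma\in[0,1)$ and $-\tfrac1\gamma\ge -1$. Bernoulli's inequality $(1+y)^{b}\ge 1+by$ (valid for real $y\ge -1$ and integer $b\ge 1$), applied with $y=-\tfrac1\gamma$, yields $t^{b}=\bigl(1-\tfrac1\gamma\bigr)^{b}\ge 1-\tfrac{b}{\gamma}$, which is precisely what we need; this proves Lemma~\ref{lem:app2}. Substituting this bound back into~\eqref{e1_} completes the induction step establishing~\eqref{eq:double_ind}, and hence the proof of Theorem~\ref{thm:UBident1}.

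I do not expect a genuine obstacle here: once everything is rewritten in terms of $t$, the statement collapses to Bernoulli. The only points needing care are (i) that in the reduction we only multiply through by the strictly positive quantities $\gamma^{a},\gamma^{a+b}$, so no inequality reverses, and (ii) the degenerate corner $\gamma=1$, which is forced only when $x=\alpha=1$ and hence $b=1$, where both sides of Lemma~\ref{lem:app2} coincide. As an aside, the special case $a=0$ (that is, $k=1$) of the computation above is exactly the inequality proved in Lemma~\ref{lem:app1}, so the two lemmas could be merged; an equivalent route that avoids invoking Bernoulli is induction on $b$, whose step rests on the one-line identity $1-t^{a+1}=t\,(1-t^{a})$, mirroring the proof of Lemma~\ref{lem:app1}.
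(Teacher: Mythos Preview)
Your proof is correct and, at its core, uses the same inequality as the paper: both arguments hinge on $\tfrac{x_k}{\gamma}\ge 1-\bigl(\tfrac{\gamma-1}{\gamma}\bigr)^{x_k}$, which in your notation is $t^{b}\ge 1-\tfrac{b}{\gamma}$. The paper establishes this by the induction on $x_k$ carried out in Lemma~\ref{lem:app1} and then performs the algebraic reduction without the substitution, whereas you recognise it as Bernoulli's inequality and let the change of variables $t=(\gamma-1)/\gamma$ do the bookkeeping. What your route buys is brevity and a unification of Lemmas~\ref{lem:app1} and~\ref{lem:app2} into a single statement (your case $a=0$ is exactly Lemma~\ref{lem:app1}); what the paper's route buys is self-containment, avoiding an appeal to a named inequality. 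One small remark: your handling of $\gamma=1$ is fine, but note that the reason the division by $t^{a}$ is harmless there is that $\gamma=1$ forces $n=x=1$ and hence $a=x'_{0}=0$, so $t^{a}=1$ regardless; the case $t^{a}=0$ with $a>0$ simply never arises.
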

\begin{proof}We have
\begin{align*}
&\frac{x_k }{\gamma}+\frac{\gamma-x_k}{\gamma}\cdot\frac{\gamma^{x'_{k-1}}-(\gamma-1)^{x'_{k-1}}}{\gamma^{x'_{k-1}}}\\
=&\frac{x_k}{\gamma}\cdot\frac{(\gamma-1)^{x'_{k-1}}}{\gamma^{x'_{k-1}}}+\frac{\gamma^{x'_{k-1}}-(\gamma-1)^{x'_{k-1}}}{\gamma^{x'_{k-1}}}\\
\geq&\frac{\gamma^{x_k}-(\gamma-1)^{x_k}}{\gamma^{x_k}}\cdot\frac{(\gamma-1)^{x'_{k-1}}}{\gamma^{x'_{k-1}}}+\frac{\gamma^{x'_{k-1}}-(\gamma-1)^{x'_{k-1}}}{\gamma^{x'_{k-1}}}\\
=&\left(1-\frac{(\gamma-1)^{x_k}}{\gamma^{x_k}}\right) \cdot\frac{(\gamma-1)^{x'_{k-1}}}{\gamma^{x'_{k-1}}}+1-\frac{(\gamma-1)^{x'_{k-1}}}{\gamma^{x'_{k-1}}}\\
=&1-\frac{(\gamma-1)^{x_k}}{\gamma^{x_k}}\cdot\frac{(\gamma-1)^{x'_{k-1}}}{\gamma^{x'_{k-1}}}\\
=&\frac{\gamma^{x'_{k}}-(\gamma-1)^{x'_{k}}}{\gamma^{x'_{k}}}\, ,
\end{align*}
where the first inequality follows from $\frac{x_k}{\gamma}\geq \frac{\gamma^{x_k}-(\gamma-1)^{x_k}}{\gamma^{x_k}}$, as shown in the proof of Lemma~\ref{AL1}, and the last equality follows from  $x'_{k}=x'_{k-1}+x_k$.\qed
\end{proof}

Hence we get for $i=n$ (see also \cite[Thm 3.3]{BarNoy2001})
\begin{equation*}
w(S)=\sum_{j=1}^nw(S_j)\geq \frac{\gamma^{x}-(\gamma-1)^{x}}{\gamma^{x}}w(\opt)\,.
\end{equation*}
We therefore get that the
\begin{equation*}
\text{sequential PoA} \leq \frac{\gamma^x}{\gamma^x-(\gamma-1)^x} = \frac{(x\alpha)^x}{(x\alpha)^x-(x\alpha-1)^x}\leq \frac{\sqrt[\alpha]{e}}{\sqrt[\alpha]{e}-1}\,,\label{IUB}
\end{equation*}
where the last inequality follows because the right hand side is exactly the limit for $x\to\infty$, and the series $b_x={(x\alpha)^x}/({(x\alpha)^x-(x\alpha-1)^x})$ is monotone in $x$, with $b_1=\alpha \leq {\sqrt[\alpha]{e}}/({\sqrt[\alpha]{e}-1})$.
This ends the proof of Theorem~\ref{thm:UBident1}\qed
\end{proof}

Basic calculus shows that
\[\alpha+\frac{1}{2}\leq \frac{\sqrt[\alpha]{e}}{\sqrt[\alpha]{e}-1} \leq \alpha+\frac{1}{e-1}\]
for $\alpha\geq 1$. Hence the improvement over the (Nash equilibrium) price of anarchy which was $\alpha+1$ is substantial. Note that for $\alpha=1, \text{PoA}={{e}}/({{e}-1})\approx1.58$.

\begin{corollary}
The sequential $\poa={{e}}/({{e}-1})\approx1.58$ for symmetric set packing games, when all players play subgame perfect.
\end{corollary}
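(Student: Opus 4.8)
The plan is to obtain the corollary directly from Theorem~\ref{thm:ident1}: playing subgame perfect in the exact, unrelaxed sense is precisely the case $\alpha=1$ of the $\alpha$-approximate subgame perfect equilibria treated in Section~\ref{sec:IM}. Substituting $\alpha=1$ into the tight value $\sqrt[\alpha]{e}/(\sqrt[\alpha]{e}-1)$ gives $\sqrt[1]{e}/(\sqrt[1]{e}-1)=e/(e-1)$, and $e/(e-1)\approx 1.58$ by a one-line numerical check. So at the level of the corollary itself there is nothing to prove beyond quoting Theorem~\ref{thm:ident1}; all the content sits in its two matching halves, Lemma~\ref{thm:LBident1} and Theorem~\ref{thm:UBident1}. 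Below I recall how those specialize to $\alpha=1$, in case one wants the argument made self-contained.

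For the upper bound, specialize Theorem~\ref{thm:UBident1}. By Lemma~\ref{lem:WSI_}, the $i$-th player to move secures at least an $x_i/(x\alpha)$ fraction of the optimum over the items still unclaimed by players $1,\dots,i-1$; bounding that residual optimum below by $w(\opt)-\sum_{j<i}w(S_j)$ yields the recursion~\eqref{e1_}, and the double induction through Lemmas~\ref{lem:app1} and~\ref{lem:app2} collapses it to $w(S)\ge\frac{\gamma^x-(\gamma-1)^x}{\gamma^x}w(\opt)$ with $\gamma=x\alpha$, i.e.\ $\gamma=x$ when $\alpha=1$. Since $(1-1/x)^x\to 1/e$, the inverse ratio $x^x/(x^x-(x-1)^x)$ increases to $e/(e-1)$, so for every instance $w(\opt)/w(S)\le e/(e-1)$.

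For the lower bound, reuse the instance of Example~\ref{ex:SPidenticalSPoA}: $n$ identical machines and $n^2$ unit jobs split into deadline classes $J_1,\dots,J_n$, each of size $n$, with the $k$-jobs due at time $k$. The optimum puts one job of each deadline on every machine, so $w(\opt)=n^2$; along the subgame perfect play where each player greedily grabs the most flexible available jobs, a player holding a $k$-job holds at most roughly $k+1$ jobs, and the ``fraction of $k$-jobs'' accounting $\sum_k r_k=n$ together with the integral estimate in~\eqref{proof:id} shows that only the top $O\!\left(n(\sqrt[\alpha]{e}-1)/\sqrt[\alpha]{e}\right)$ deadline classes are ever used. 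Hence $w(S)$ is at most about $\frac{\sqrt[\alpha]{e}-1}{\sqrt[\alpha]{e}}n^2$, and the ratio tends to $\sqrt[\alpha]{e}/(\sqrt[\alpha]{e}-1)=e/(e-1)$ at $\alpha=1$.

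The only genuinely delicate points are the two already carried out in the excerpt: the integral bound converting $\sum_{k=n-k'}^{n}\frac{n\alpha}{k+1+\alpha}\le n$ into an upper bound on $k'$ for the lower bound, and the monotonicity in $x$ of $b_x=(x\alpha)^x/((x\alpha)^x-(x\alpha-1)^x)$, with $b_1=\alpha\le\sqrt[\alpha]{e}/(\sqrt[\alpha]{e}-1)$, for the upper bound. For the corollary in isolation, however, there is no obstacle at all: it is Theorem~\ref{thm:ident1} evaluated at $\alpha=1$, using $\sqrt[1]{e}=e$.
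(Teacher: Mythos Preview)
Your proposal is correct and matches the paper's approach exactly: the paper also derives the corollary simply by specializing Theorem~\ref{thm:ident1} to $\alpha=1$, noting $\sqrt[1]{e}=e$ so that $\sqrt[\alpha]{e}/(\sqrt[\alpha]{e}-1)=e/(e-1)\approx 1.58$. Your additional recap of Lemma~\ref{thm:LBident1} and Theorem~\ref{thm:UBident1} is accurate but not required for the corollary itself.
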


\section{$k$-Collusion Price of Anarchy}\label{sec:dealproof}

While sequential play was a way to reduce the price of anarchy for symmetric set packing games, we now show that collusion of players helps to reduce the price of anarchy, too. This is true also for general, asymmetric set packing games. Recall that an $\alpha$-approximate $k$-collusion Nash equilibrium means that no coalition $K$ of up to $k$ players can improve their total value $w(K)$ by more than a factor $\alpha$.
\begin{theorem}
The $k$-collusion $\poa=\alpha+\frac{n-k}{n-1}$ for set packing games, when all players play $\alpha$-approximate $k$-collusion Nash  equilibria.
\end{theorem}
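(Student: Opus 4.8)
\emph{Upper bound.} The plan is to average the $\alpha$-approximate $k$-collusion condition over all $\binom{n}{k}$ coalitions of size $k$. Fix an $\alpha$-approximate $k$-collusion Nash equilibrium $S$ and an optimal solution $\opt$; as for ordinary Nash equilibria, $S$ must be a valid set packing, since a player with payoff $-\infty$ could improve by deviating to $\emptyset$ inside its own coalition. For a coalition $K$ with $|K|=k$ write $S_{-K}=\cup_{j\notin K}S_j$ and consider the joint deviation $T=\cup_{i\in K}T_i$ with $T_i:=\opt_i\setminus S_{-K}$. Each $T_i$ lies in $\mathcal{S}_i$ by downward closedness, the $T_i$ are pairwise disjoint (subsets of the disjoint sets $\opt_i$), and $T\cap S_{-K}=\emptyset$, so $w(T,S_{-K})=w(T)=\sum_{i\in K}\big(w(\opt_i)-w(\opt_i\cap S_{-K})\big)$; hence \eqref{eq:k-eq} gives $\alpha\,w(S_K)\ge\sum_{i\in K}\big(w(\opt_i)-w(\opt_i\cap S_{-K})\big)$. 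Summing this over all $K$ of size $k$ and using that each player lies in $\binom{n-1}{k-1}$ such coalitions, while an item of $\opt\cap S$ whose holder in $S$ differs from its holder in $\opt$ is counted $\binom{n-2}{k-1}$ times in $\sum_K\sum_{i\in K}w(\opt_i\cap S_{-K})$ and an item with the same holder is counted $0$ times, one gets
\[
\alpha\binom{n-1}{k-1}w(S)\ \ge\ \binom{n-1}{k-1}w(\opt)-\binom{n-2}{k-1}w(B),
\]
where $B\subseteq\opt\cap S$ is the set of items used by both $S$ and $\opt$ but assigned to different players. Dividing by $\binom{n-1}{k-1}$, using $\binom{n-2}{k-1}/\binom{n-1}{k-1}=(n-k)/(n-1)$, and bounding $w(B)\le w(S)$ (as $B\subseteq S$ as an item set) yields $w(\opt)\le\big(\alpha+\frac{n-k}{n-1}\big)w(S)$.

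\emph{Lower bound.} For the matching lower bound I would give a throughput-scheduling instance generalizing Example~\ref{ex:related2}. Take $n$ machines, machine $1$ ``fast'' and $2,\dots,n$ ``slow''; a family $Q=Q_2\cup\dots\cup Q_n$ of unit-weight ``shared'' jobs with $|Q_i|=t$, a job in $Q_i$ feasible on machine $1$ and on machine $i$ only (tiny processing time on machine $1$, processing time $1/t$ on machine $i$, common deadline $1$); and $C$ further unit-weight jobs feasible on machine $1$ only, where $C$ is controlled by the processing times on the fast machine and $t$ is a scaling parameter making everything integral. In the optimum, machine $1$ runs $C$ private jobs and each slow machine $i$ runs all of $Q_i$, so $w(\opt)=C+t(n-1)$; in the candidate equilibrium $S$, machine $1$ runs all $q:=t(n-1)$ shared jobs and the slow machines are idle, so $w(S)=q$. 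Choosing $C=\alpha q-(k-1)t$ makes $S$ an $\alpha$-approximate $k$-collusion equilibrium: machine $1$ alone can only improve to value $C\le\alpha q$; a coalition not containing machine $1$ cannot improve at all, since all shared jobs are blocked by machine $1$; and a coalition $K$ containing machine $1$ does best by having machine $1$ switch to $C$ private jobs while its $k-1$ partners each pick up their own $Q_i$, for a total $C+(k-1)t=\alpha q$, exactly a factor $\alpha$ over the coalition's current value $q$. Then
\[
\frac{w(\opt)}{w(S)}=\frac{C+t(n-1)}{q}=\frac{\alpha q-(k-1)t+t(n-1)}{t(n-1)}=\alpha+\frac{n-k}{n-1}.
\]

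\emph{Main obstacle.} The routine parts are the binomial bookkeeping in the upper bound and the arithmetic in the lower bound (the quantities above are all read off by inspection, because in set packing games only the \emph{availability} of strategies matters). The part that needs care is verifying that the lower-bound profile $S$ is genuinely an $\alpha$-approximate $k$-collusion equilibrium: one must rule out \emph{every} coalitional deviation, and it is precisely here that the improvement over the Nash bound $\alpha+1$ is pinned down, since a size-$k$ coalition can ``unfreeze'' only the shared jobs both of whose endpoints (the fast machine and the designated slow machine) lie in the coalition, giving the correction $\frac{n-k}{n-1}$ rather than $1$. A secondary difficulty is that the construction as stated needs $q\le C$, i.e.\ $\alpha\ge 1+\tfrac{k-1}{n-1}$; for smaller $\alpha$, where the target ratio is below $2$, the fast machine is already at capacity in $S$ and a separate ``chain-type'' construction is required, the prototype being the $n=3,\ k=2,\ \alpha=1$ instance $\mathcal{S}_1=\{\emptyset,\{o_1\},\{o_2\}\}$, $\mathcal{S}_2=\{\emptyset,\{o_2\},\{o_3\}\}$, $\mathcal{S}_3=\{\emptyset,\{o_3\}\}$ with $\opt=(o_1,o_2,o_3)$, $S=(o_2,o_3,\emptyset)$, which attains ratio $3/2=\alpha+\frac{n-k}{n-1}$.
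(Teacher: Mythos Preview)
Your upper bound is correct and is essentially the paper's argument: both sum the $k$-collusion inequality over all $\binom{n}{k}$ coalitions and use that a player appears in $\binom{n-1}{k-1}$ of them while a pair appears in $\binom{n-2}{k-2}$. Your presentation is in fact a bit cleaner than the paper's $x_{ij}$ bookkeeping, since you go straight to the single set $B$ of ``mismatched'' items and use $w(B)\le w(S)$.

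The lower bound is where your proposal diverges from the paper and where it has a real gap. The paper does \emph{not} give a throughput-scheduling instance; it builds a general set-packing game in which every player~$i$'s strategy space is restricted to subsets of either $\opt_i$ or $S_i$, with one unit-weight item in $\opt_i\cap S_j$ for every ordered pair $i\neq j$ and a private weight of $n-k+(n-1)(\alpha-1)$ in $\opt_i\setminus S$. This works uniformly for all $n,k$ and all $\alpha\ge 1$, and the authors explicitly remark that, unlike their other lower bounds, they did \emph{not} find a matching throughput-scheduling example here. So your attempt to realise the bound in throughput scheduling is more ambitious than what the paper achieves.

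Two concrete issues with your construction. First, ``tiny processing time on machine~1'' for the shared jobs is not what you want: if the shared jobs are truly negligible on machine~1, then machine~1 can run all $q$ shared jobs \emph{and} all $C$ private jobs, so its best singleton deviation is $C+q$, not $C$, and the $\alpha$-condition for player~1 alone already fails. You need all jobs to have processing time $1/C$ on machine~1 so that its capacity is exactly $C$. Second, and more seriously, you acknowledge but do not close the range $\alpha<1+\frac{k-1}{n-1}$: your fast-machine construction requires $C\ge q$, and the single $n=3,k=2,\alpha=1$ chain instance does not generalise in any obvious way to arbitrary $n,k,\alpha$. The paper's ``complete bipartite'' construction (one $\opt_i\cap S_j$ item for every $i\neq j$) is precisely what fills this gap, and it is what makes the lower bound tight across the whole parameter range.
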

Note that for $k=1$, we consider $\alpha$-approximate Nash equilibria and the $1$-collusion $\poa=\alpha+1$, which is consistent with Theorem~\ref{thm:UB2}. Also note that for $k=n$, we consider an $\alpha$-approximate (centralized) solution, so for $\alpha=1$ this is just an optimal solution.

\begin{proof}
First we give an upper bound proof.
\begin{lemma}
The $k$-collusion $\poa\leq \alpha+\frac{n-k}{n-1}$ for set packing games, when the players play an $\alpha$-approximate $k$-collusion Nash equilibrium.
\end{lemma}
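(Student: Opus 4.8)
The plan is to apply the defining inequality \eqref{eq:k-eq} of an $\alpha$-approximate $k$-collusion Nash equilibrium to \emph{every} coalition $K$ of exactly $k$ players, plug in one carefully chosen joint deviation per coalition, and then average the resulting inequalities over all $\binom{n}{k}$ coalitions. Fix an instance with optimal solution $\opt=(\opt_1,\dots,\opt_n)$ and an $\alpha$-approximate $k$-collusion Nash equilibrium $S=(S_1,\dots,S_n)$. The point that makes the bound beat $\alpha+1$ is that when coalition $K$ deviates it competes only against the sets held by players outside $K$, i.e.\ against $S_{-K}=\cup_{j\notin K}S_j$; it may reuse all items currently held by its own members. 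So take $T_i:=\opt_i\setminus S_{-K}$ for $i\in K$. Each $T_i\in\mathcal{S}_i$ because $T_i\subseteq\opt_i\in\mathcal{S}_i$ and $\mathcal{S}_i$ is downward closed, the $T_i$ are pairwise disjoint since the $\opt_i$ are, and $T:=\cup_{i\in K}T_i$ meets no set of $S_{-K}$. Hence \eqref{eq:k-eq} gives
\[
\alpha\,w(S_K)\ \ge\ w(T,S_{-K})\ =\ \sum_{i\in K}w(\opt_i\setminus S_{-K})\ =\ w(\opt_K)-\sum_{i\in K}w(\opt_i\cap S_{-K})\,,
\]
where $w(\opt_K):=\sum_{i\in K}w(\opt_i)$ and I use $w(S_K,S_{-K})=w(S_K)$ since $S$ is a set packing.

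Next I would sum this over all $K\subseteq\{1,\dots,n\}$ with $|K|=k$ and track multiplicities. Since each player lies in $\binom{n-1}{k-1}$ coalitions, the left-hand side sums to $\alpha\binom{n-1}{k-1}w(S)$ and $\sum_K w(\opt_K)$ sums to $\binom{n-1}{k-1}w(\opt)$. Writing $c_{ij}:=w(\opt_i\cap S_j)$ and using that the $S_j$ are disjoint, $w(\opt_i\cap S_{-K})=\sum_{j\notin K}c_{ij}$, so $\sum_K\sum_{i\in K}w(\opt_i\cap S_{-K})=\sum_K\sum_{i\in K,\ j\notin K}c_{ij}=\binom{n-2}{k-1}\sum_{i\ne j}c_{ij}$, because a fixed ordered pair $(i,j)$ with $i\ne j$ satisfies $i\in K,\ j\notin K$ in exactly $\binom{n-2}{k-1}$ coalitions. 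Dividing by $\binom{n-1}{k-1}$ and using $\binom{n-2}{k-1}/\binom{n-1}{k-1}=(n-k)/(n-1)$ yields
\[
\alpha\,w(S)\ \ge\ w(\opt)-\frac{n-k}{n-1}\sum_{i\ne j}c_{ij}\,.
\]

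It remains to bound $\sum_{i\ne j}c_{ij}$ by $w(S)$. Since the $\opt_i$ partition a subset of $J$, for each fixed $j$ we have $\sum_i c_{ij}=w(\opt\cap S_j)\le w(S_j)$, so $\sum_{i\ne j}c_{ij}\le\sum_j w(\opt\cap S_j)\le\sum_j w(S_j)=w(S)$. Substituting gives $\alpha\,w(S)\ge w(\opt)-\tfrac{n-k}{n-1}w(S)$, i.e.\ $w(\opt)/w(S)\le\alpha+\tfrac{n-k}{n-1}$, which is the claimed upper bound; taking $k=1$ reproduces Theorem~\ref{thm:UB2}, and $k=n$ gives $\alpha$.

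I do not expect a real obstacle here: this is a counting argument in the spirit of the proof of Theorem~\ref{thm:UB2}. The only steps needing care are (i) choosing the deviation so that each coalition reclaims $\opt_i$ restricted to $J\setminus S_{-K}$ rather than merely to $J\setminus S$ — this is exactly where the collusion size enters — and (ii) keeping the coalition-counting bookkeeping straight so that the binomial ratio collapses cleanly to $(n-k)/(n-1)$, with the degenerate cases $\binom{n-2}{k-1}=0$ for $k=n$ and $\binom{n-2}{k-1}/\binom{n-1}{k-1}=1$ for $k=1$ handled uniformly. A matching lower bound (the other half of the theorem, not part of this lemma) would be obtained from a throughput-scheduling instance in the style of Examples~\ref{ex:related2} and~\ref{ex:PoAIdenticalSetPacking}.
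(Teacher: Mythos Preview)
Your proof is correct and follows essentially the same approach as the paper: both pick the joint deviation $T_i=\opt_i\setminus S_{-K}$ for each coalition $K$ of size $k$, invoke~\eqref{eq:k-eq}, sum over all $\binom{n}{k}$ coalitions, and reduce the binomial bookkeeping via $\binom{n-2}{k-1}/\binom{n-1}{k-1}=(n-k)/(n-1)$. Your write-up is in fact a bit more direct than the paper's, which introduces the variables $x_{ij},x_{0j},x_{i0}$ and reaches the same inequality through an extra add-to-both-sides step; the underlying counting argument is identical.
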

\begin{proof}
The proof mimics our earlier proof of Theorem~\ref{thm:UB2}, only here we have to keep track of the values of more subsets of $J$.
We fix an optimal solution $\opt$ and a $k$-collusion Nash equilibrium $S$, write $N=\{1,\dots,n\}$, and use the following notation:
\[
x_{ij}=\begin{cases}
\text{ the total weight of items  in $\opt_i\cap S_j$} & \text{for }i,j\in N\,, \\
\text{ the total weight of items  in $S_j\setminus\opt$} & \text{for }i=0,j\in N\,, \\
\text{ the total weight of items  in $\opt_i\setminus S$} & \text{for }i\in N,j=0\,.
\end{cases}
\]

Our proof is based on the following observation: Players from any coalition $K$ collude and collectively deviate if and only if the total weight of items allocated to them increases by more than a factor $\alpha\geq 1$, by choosing any set of items in $(\cup_{i\in K}S_i)\cup (J\setminus\cup_{i\not\in K}S_i)$. Therefore, in particular for all coalitions $K$ of size $k$ in any $\alpha$-approximate $k$-collusion Nash equilibrium, we have
\begin{align*}
\alpha\left(\sum_{j\in K}\left(\sum_{i\in N} x_{ij}+ x_{0j}\right)\right)\geq\sum_{i\in K} \left(\sum_{j\in K}x_{ij} + x_{i0}\right)\,.
\end{align*}
Note that all items that contribute to the left-hand side are allocated to players in $K$ in the equilibrium $S$. Also note that all items that contribute to the right-hand side can be feasibly allocated to players in $K$, since these items are allocated to players from $K$ in $\opt$. Also, these items are available for coalition $K$, since they are either allocated to players in $K$ in $S$, or not allocated. We rewrite this as
\begin{align}\label{eq:dpoacond}
\alpha\left(\sum_{j\in K}\left(\sum_{i\in N} x_{ij}+ x_{0j}\right)\right)\geq\mathop{\sum_{i\in K}\sum_{j\in K}}_{i\neq j}x_{ij} + \sum_{i\in K}\left( x_{ii} + x_{i0}\right) \,.
\end{align}
Now, any player $i$ is in ${\binom{n-1}{k-1}}$ coalitions of size $k$, and any combination of two players $i,j$ is in $\binom{n-2}{k-2}$ coalitions of size $k$. Therefore, summing \eqref{eq:dpoacond} over all coalitions $K$ of size $k$ yields
\begin{align*}
&\alpha\binom{n-1}{k-1} \left(\sum_{j\in N}\left(\sum_{i\in N} x_{ij}+ x_{0j}\right)\right)\\
\geq &\binom{n-2}{k-2}\mathop{\sum_{i\in N}\sum_{j\in N}}_{i\neq j} x_{ij}+\binom{n-1}{k-1}\sum_{i\in N}(x_{ii}+ x_{i0})\,.
\end{align*}
Adding
$$\left({\binom{n-1}{k-1} }-{\binom{n-2}{k-2} }\right)\mathop{\sum_{i\in N}\sum_{j\in N}}_{i\neq j} x_{ij}={\binom{n-2}{k-1} }\mathop{\sum_{i\in N}\sum_{j\in N}}_{i\neq j} x_{ij}$$
 to both sides yields
\begin{align}\label{eq:compdpe}
\nonumber &\alpha\left({\binom{n-1}{k-1}}+{\binom{n-2}{k-1}}\right) \mathop{\sum_{i\in N}\sum_{j\in N}}_{i\neq j} x_{ij}+ \binom{n-1}{k-1} \left(\sum_{j\in N}(x_{jj}+ x_{0j})\right)\\
\geq &\binom{n-1}{k-1}\mathop{\sum_{i\in N}\sum_{j\in N}}_{i\neq j} x_{ij}+\binom{n-1}{k-1}\sum_{i\in N}(x_{ii}+ x_{i0})\,.
\end{align}
Therefore,
\begin{align*}\label{eq:DPPOA}
 & \alpha \left( \binom{n-1}{k-1}+\binom{n-2}{k-1}\right)w(S)\\
=\ & \alpha\left( \left( \binom{n-1}{k-1}+\binom{n-2}{k-1}\right) \sum_{i\in N}\sum_{j\in N} x_{ij}+\left(\binom{n-1}{k-1}+\binom{n-2}{k-1}\right) \sum_{j\in N} x_{0j}\right)\\
\geq\ & \alpha\left({\binom{n-1}{k-1}}+{\binom{n-2}{k-1}}\right) \mathop{\sum_{i\in N}\sum_{j\in N}}_{i\neq j} x_{ij}+ \binom{n-1}{k-1} \left(\sum_{j\in N}(x_{jj}+ x_{0j})\right)\\
\geq\ & \binom{n-1}{k-1}\mathop{\sum_{i\in N}\sum_{j\in N}}_{i\neq j} x_{ij}+\binom{n-1}{k-1}\sum_{i\in N}(x_{ii}+ x_{i0})\\
=\ & \binom{n-1}{k-1}w(\opt)\,,
\end{align*}
where the last inequality follows from \eqref{eq:compdpe}. This yields
\begin{align*}
k\text{-collusion } \poa\leq \alpha \frac{{\binom{n-1}{k-1}}+{\binom{n-2}{k-1}}}{{\binom{n-1}{k-1}}}=\alpha+\frac{n-k}{n-1}\,.
\end{align*}\qed
\end{proof}

In fact, this proof of the upper bound provides us with an easy way to create a tight lower bound example for any $n$.
\begin{example}\label{ex:DPE}
We make the upper bound analysis tight by setting $x_{ii}=0$ and $x_{0i}=0$ for all players $i\in N$ . We normalize $x_{ij}=1$ for all players $i,j\in N$ for which $i\neq j$, and finally we set $x_{i0}=n-k+(n-1)(\alpha-1)$ for all players $i\in N$. We construct the strategy spaces such that any player $i$ can only choose subsets of either $\opt_i$ or $S_i$, where $S_i$ is the set chosen in the in the $k$-collusion Nash equilibrium. The resulting game for $n=3, k=2$ is shown in Figure~\ref{vb8}.\eoe
\end{example}

\begin{figure}[tbhp]
   \centering
   \includegraphics[width=\textwidth]{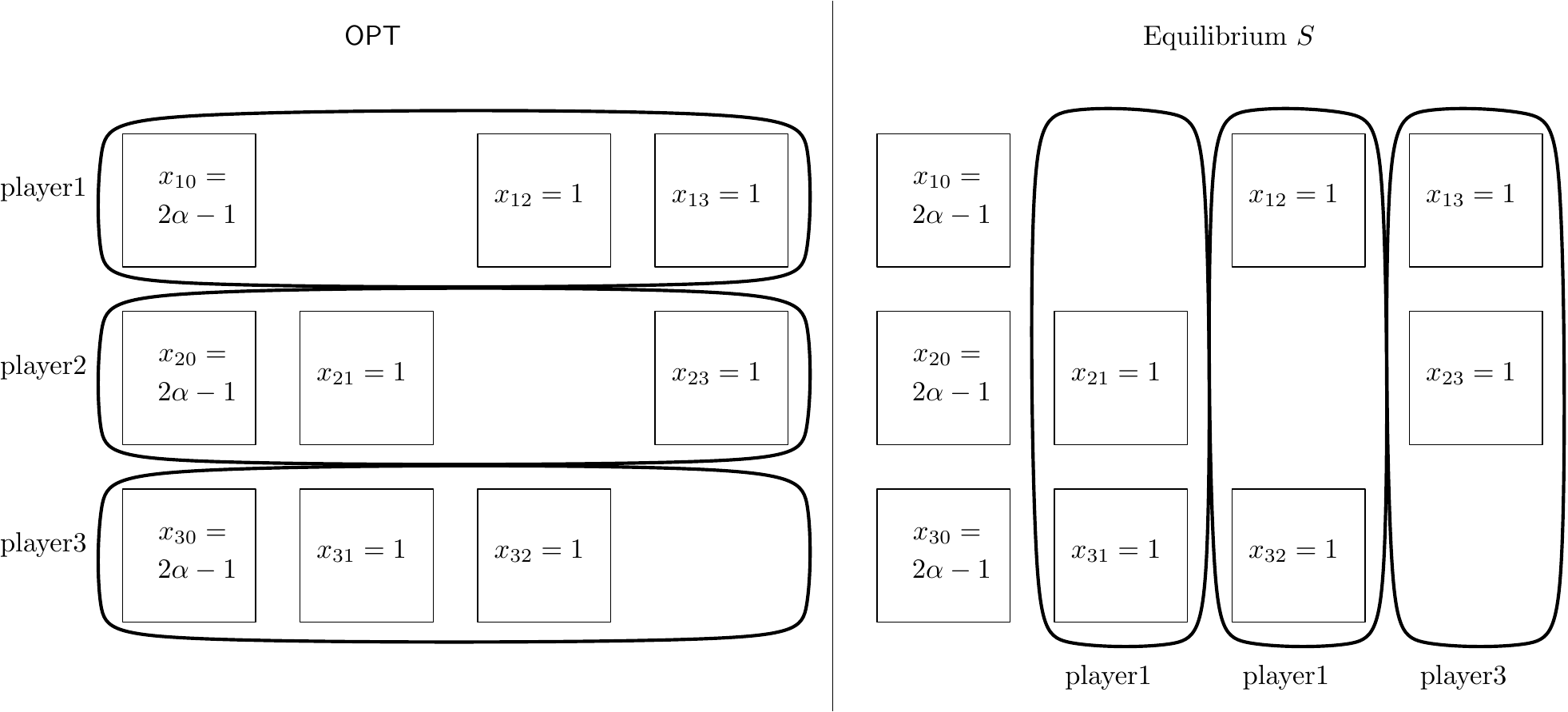}\caption{The $k$-collusion Nash equilibrium from Example~\ref{ex:DPE} for $k=2$ and $n=3$. Circled items are allocated to the same player. Each item is named after the value used in the upper bound proof.}\label{vb8}
\end{figure}

To see that this actually yields an $\alpha$-approximate $k$-collusion Nash equilibrium, consider any coalition $K$ of $k$ players. If players play strategy profile $S$, any player in $K$ has utility $n-1$. By switching to the strategy chosen in $\opt$, each player in $K$ obtains utility $(k-1)1+1((n-1)(\alpha-1)+n-k)=\alpha(n-1)$, which is fine. If some players in $K$ choose a subset of the items chosen in $\opt$, and other players in $K$ choose a subset of the items chosen in $S$, then this yields a total value at most $\alpha(n-1)$ for each player. We see that no coalition of $k$ players can improve by deviating, from which the result follows. \qed
\end{proof}
For $\alpha=1$, we obtain the following as a special case.

\begin{corollary}
The $k$-collusion $\poa=1+\frac{n-k}{n-1}$ for set packing games.
\end{corollary}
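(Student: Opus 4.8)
The goal is to prove that the $k$-collusion price of anarchy equals $\alpha + \frac{n-k}{n-1}$ for set packing games, which decomposes into an upper bound (matching the Nash bound style of Theorem~\ref{thm:UB2}) and a matching lower bound construction.

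\medskip

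\noindent\textbf{Upper bound.} The plan is to generalize the averaging argument from Theorem~\ref{thm:UB2}. Fix an optimal solution $\opt$ and an $\alpha$-approximate $k$-collusion equilibrium $S$, and introduce bookkeeping variables $x_{ij}$ that record the weight of items in $\opt_i\cap S_j$, together with boundary terms $x_{0j}$ (items in $S_j$ missed by $\opt$) and $x_{i0}$ (items in $\opt_i$ missed by $S$). The first step is the local deviation inequality: for every coalition $K$ of size exactly $k$, the players in $K$ could collectively deviate to $\cup_{i\in K}\opt_i$ intersected with what is available to them (namely the items they currently hold plus the unallocated items), yielding
\[
\alpha\sum_{j\in K}\Bigl(\sum_{i\in N}x_{ij}+x_{0j}\Bigr)\ \geq\ \sum_{i\in K}\Bigl(\sum_{j\in K}x_{ij}+x_{i0}\Bigr)\,.
\]
The second step is to sum this over all $\binom{n}{k}$ coalitions of size $k$. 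A single player index appears in $\binom{n-1}{k-1}$ coalitions and an ordered pair of distinct indices appears in $\binom{n-2}{k-2}$ of them; plugging these multiplicities in turns the coalition-indexed sum into two global sums over $N$. The third step is purely algebraic: one adds a suitable multiple of $\sum_{i\neq j}x_{ij}$ to both sides (using $\binom{n-1}{k-1}-\binom{n-2}{k-2}=\binom{n-2}{k-1}$), so that the left side becomes $\alpha\bigl(\binom{n-1}{k-1}+\binom{n-2}{k-1}\bigr)w(S)$ after recognizing $w(S)=\sum_{i,j}x_{ij}+\sum_j x_{0j}$, and the right side becomes $\binom{n-1}{k-1}w(\opt)$. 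Dividing gives $\poa\leq\alpha\cdot\frac{\binom{n-1}{k-1}+\binom{n-2}{k-1}}{\binom{n-1}{k-1}}$, and Pascal's identity together with $\binom{n-2}{k-1}/\binom{n-1}{k-1}=\frac{n-k}{n-1}$ simplifies this to $\alpha+\frac{n-k}{n-1}$.

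\medskip

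\noindent\textbf{Lower bound.} The natural strategy is to run the upper bound inequality chain backwards and make every inequality tight. Set $x_{ii}=0$ and $x_{0i}=0$, normalize $x_{ij}=1$ for $i\neq j$, and choose $x_{i0}=n-k+(n-1)(\alpha-1)$ so that in any coalition of size $k$ the deviation gain is exactly a factor $\alpha$: a player in $K$ gets $n-1$ in $S$, and by switching to its $\opt$-set it collects $(k-1)$ from the other coalition members' items plus $n-k+(n-1)(\alpha-1)$ from $x_{i0}$, totalling $\alpha(n-1)$. One then defines strategy spaces in which each player $i$ may only pick subsets of $\opt_i$ or of $S_i$, and checks that no mixed choice within a coalition beats the all-$\opt$ deviation, so $S$ is a genuine $\alpha$-approximate $k$-collusion equilibrium with $w(\opt)/w(S)=\bigl(n(n-1)+n\,x_{i0}\bigr)/(n(n-1))=\alpha+\frac{n-k}{n-1}$.

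\medskip

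\noindent\textbf{Main obstacle.} The conceptually tricky part is getting the double-counting multiplicities exactly right and tracking which of $x_{ij}$, $x_{0j}$, $x_{i0}$ legitimately appear on which side of the coalition inequality — specifically, that left-side items are precisely those \emph{held} by the coalition (so they are the ones that can be given up) while right-side items are those that are \emph{feasible and available} to the coalition (its $\opt$-allocation, which lies in the union of its current items and the unallocated items). Everything else is routine binomial-coefficient manipulation and a direct verification that the extremal instance is indeed an equilibrium.
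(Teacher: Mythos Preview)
Your proposal is correct and follows the paper's proof essentially line for line: the same bookkeeping variables $x_{ij}$, the same coalition deviation inequality summed over all size-$k$ coalitions with the multiplicities $\binom{n-1}{k-1}$ and $\binom{n-2}{k-2}$, the same binomial rebalancing via $\binom{n-1}{k-1}-\binom{n-2}{k-2}=\binom{n-2}{k-1}$, and the same tight lower-bound instance with $x_{ii}=x_{0i}=0$, $x_{ij}=1$ for $i\neq j$, and $x_{i0}=n-k+(n-1)(\alpha-1)$. The corollary itself is then just the specialization $\alpha=1$, exactly as in the paper.
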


Although the $k$-collusion price of anarchy is strictly lower than the price of anarchy for all $k\geq 2$, note that this improvement becomes negligible for large $n$. Interestingly, as opposed to all other lower bound examples in this paper, we did not find a matching lower bound example for throughput scheduling games.

\section{Conclusions}
An obvious departure from the suggested model for set packing games, and interesting direction for future research is a model where more than one player may select one and the same item. Then, the actual allocation might be fractional, or probabilistically. This would also allow to consider mixed strategies, which would not be well-defined in the discrete setting we address here.
This departure, however, requires other techniques, as it means to give away the  main distinguishing feature of set packing games that we have exploited to obtain our bounds, namely that players only affect each other via the set of ``available'' items, and that given, do not affect the resulting payoffs.

\section*{Acknowledgements}
A preliminary version with parts of the results presented in this paper appeared in the conference proceedings \cite{deJong2013}. Thanks to Rudolf M\"uller, Frits Spieksma, and Johann Hurink for some helpful discussions.

\end{document}